\pgfplotsset{compat=1.15}
\newcommand{\sX}{\mathcal{X}}
\newcommand{\sP}{\mathcal{P}}
\newcommand{\sQ}{\mathcal{Q}}
\newcommand{\pZeroStar}{p_1^*}
\newcommand{\qZeroStar}{q_1^*}
\newcommand{\pOneStar}{p_0^*}
\newcommand{\qOneStar}{q_0^*}
\newcommand{\pHStar}{p_{\textup{H}}^*}
\newcommand{\qHStar}{q_{\textup{H}}^*}
\newcommand{\tauN}{\tau_n}
\newcommand{\tauZeroN}{\tau_{0,n}}
\newcommand{\expHoeffding}{\mathcal{E}_{r}}
\newcommand{\expec}{\mathbb{E}}
\newcommand{\indicator}{\mathrm{1}}
\newcommand{\bern}{\textup{Bern}}
\newcommand{\inb}[1]{\left\{#1\right\}}
\newcommand{\inp}[1]{\left(#1\right)}
\newcommand{\insq}[1]{\left[#1\right]}
\xdef\csname vec\x \endcsname{\noexpand\ensuremath{\noexpand\bm{\x}}}
\xdef\csname vec\x \endcsname{\noexpand\ensuremath{\noexpand\bm{\x}}}
\xdef\csname c\x \endcsname{\noexpand\ensuremath{\noexpand\mathcal{\x}}}
\xdef\csname bb\x \endcsname{\noexpand\ensuremath{\noexpand\mathbb{\x}}}
\DeclareMathOperator*{\argmin}{arg\,min}
\newtheorem{theorem}{Theorem}
\newtheorem{lemma}{Lemma}
\begin{document}
	\title{Sequential Adversarial Hypothesis Testing}
	
\author[1]{Eeshan Modak}
\author[2]{Mayank Bakshi}
\author[3]{Bikash Kumar Dey}
\author[1]{Vinod M. Prabhakaran}
\affil[1]{Tata Institute of Fundamental Research, Mumbai, India}
\affil[2]{Arizona State University, Tempe, AZ, USA}
\affil[3]{Indian Institute of Technology Bombay, Mumbai, India}

\maketitle



\begin{abstract}
	We study the adversarial binary hypothesis testing problem \cite{brandao2020adversarial} in the sequential setting. Associated with each hypothesis is a closed, convex set of distributions. Given the hypothesis, each observation is generated according to a distribution chosen (from the set associated with the hypothesis) by an adversary who has access to past observations. In the sequential setting, the number of observations the detector uses to arrive at a decision is variable; this extra freedom improves the asymptotic performance of the test. We characterize the closure of the set of achievable pairs of error exponents. We also study the problem under constraints on the number of observations used and the probability of error incurred.
\end{abstract}

\section{Introduction}
In binary hypothesis testing, the goal is to distinguish between two distributions (sources) \cite{chernoff1952measure},\cite{hoeffding1965asymptotically}, say $p$ (hypothesis $H_0$) or $q$ (hypothesis $H_1$). In fixed length hypothesis testing, the tester has access to
$n$ independent and identically distributed (i.i.d.) samples from the true distribution. The Neyman-Pearson lemma \cite{neyman1933problem} states that the likelihood ratio test obtains the optimal tradeoff between false alarm (type-I error) and missed detection (type-II error). If we restrict the type-I error exponent to be at least $r>0$, the maximum type-II error exponent is given by $\min\{D(s\|q): D(s\|p) \le r\}$ \cite{blahut1974hypothesis}. The Chernoff-Stein lemma \cite[Theorem 11.8.3]{cover1999elements} states that the optimal type-II error exponent when the type-I error probability is less than some $\epsilon \in (0,1)$ is given by the relative entropy $D(p\|q)$ between the two distributions.

In binary composite hypothesis testing, each hypothesis is associated with a set of distributions. Given a hypothesis, the observations are i.i.d. according to a \emph{fixed} distribution from the associated set. Zeitouni, Ziv and Merhav \cite{zeitouni1992generalized} analyse the conditions under which the generalized likelihood ratio test (GLRT) is optimal. In a variation of this problem, under a given hypothesis, observations could be arbitrarily distributed according to any of the distributions in the associated set. The choice of the distributions can be viewed to be adversarial. Fangwei and Shi \cite{fangwei1996hypothesis} study the problem where the adversary is non-adaptive but stochastic. Subsequently, Brand\~{a}o, Harrow, Lee, and Peres~\cite{brandao2020adversarial} consider the case with an adaptive adversary which has feedback of the past observations and can use this information to select the distribution of the next sample. They showed that the Chernoff-Stein exponent in the adversarial setting is $\min \limits_{p \in \mathcal{P}, q \in \mathcal{Q}} D(p\|q)$.
\\

In sequential binary hypothesis testing, the detector stops and outputs a decision only when it develops enough confidence. The length of the test is a stopping time subject to some constraint. A commonly studied constraint is the expectation constraint which says that in expectation (under $H_0$ and $H_1$) the stopping time should be less than or equal to $n$. Wald and Wolfowitz \cite{wald1948optimum} show in the sequential setting that
the pair of exponents $(D(q\|p),D(p\|q))$ can be simultaneously achieved by the sequential probability ratio test (SPRT) with appropriate thresholds. Lalitha and Javidi \cite{lalitha2016reliability} study a hybrid setting where one could draw more than $n$ samples with exponentially small probability, $e^{-n\gamma}$. Here, the parameter $\gamma$ interpolates between the fixed length ($\gamma \rightarrow \infty$) and the sequential setting ($\gamma = 0$). Lai \cite{lai2002asymptotic}, and Li, Liu and Ying \cite{li2014generalized} consider the sequential testing problem of composite hypotheses and show the asympotic optimality of generalized SPRT in certain cases. Li and Tan \cite{li2020second} analyse the second order asymptotics of sequential hypothesis testing. Pan, Li and Tan \cite{pan2022asymptotics} consider the sequential testing problem where one of the hypotheses is given by a single distribution and the other is given by a parameterized family of distributions. They study the problem under the constraint that the test will require $n$ or fewer samples with high probability and characterize first and second order exponents under certain conditions.

\subsection{Our work}
In this work, we study the adversarial hypothesis testing problem \cite{brandao2020adversarial} under the sequential setting. Associated with each hypothesis is a closed, convex set of distributions. All the distributions have the same support. Given the hypothesis, at each time instance, the adversary chooses a distribution from which the next observation will be sampled. The adversary has feedback of the past observations and can make an adaptive choice. The test is given by a stopping rule subject to certain constraints on the number of observations and a corresponding decision rule. We first show that we can strictly improve the set of achievable error exponents and in particular simultaneously achieve the pair $\left(\min \limits_{p \in \mathcal{P}, q \in \mathcal{Q}} D(q\|p),\min \limits_{p \in \mathcal{P}, q \in \mathcal{Q}} D(p\|q)\right)$. Furthermore, a modified SPRT (with two different ratios) can achieve this point. We also study the problem under a probability constraint on the stopping time and show that the set of achievable error exponents is a rectangular region with the corner point $\left(\min \limits_{p \in \mathcal{P}, q \in \mathcal{Q}} D(q\|p),\min \limits_{p \in \mathcal{P}, q \in \mathcal{Q}} D(p\|q)\right)$. Lastly, we study the problem under a constraint on the probability of error and show that the set of achievable error exponents is a rectangular region with the corner point $\left( \min \limits_{p \in \mathcal{P}, q \in \mathcal{Q}} D(p\|q),\min \limits_{p \in \mathcal{P}, q \in \mathcal{Q}} D(q\|p) \right)$.

\subsection{Notation}
We use the notation $X^t$ to denote the set of random variables $X_1,\dots,X_t$. Likewise, $x^t$ refers to the sequence $x_1,\ldots,x_t$. $\bern(p)$ is the Bernoulli distribution with parameter $p$ and $h(p):=-p\log(p) - (1-p)\log(1-p)$ denotes the binary entropy function. $\mathrm{1}_{E}$ is the indicator random variable for the event $E$. We use $a \land b$ to denote the minimum of $a$ and $b$.
\section{Problem Setup} \label{sec:problem_setup}
Let $\sX$ be a finite alphabet. Consider two sets of distributions $\sP,\sQ \subseteq \mathbb{R}^{\sX}$ with the following properties: \\
(A1) $\sP$ and $\sQ$ are closed and convex. We assume that $\sP \cap \sQ = \emptyset$ to make the problem non-trivial. \\
(A2) All pairs of distributions $(p,q)$, where $p \in \sP$ and $q \in \sQ$, are mutually absolutely continuous. \\
Consider the following adversarial hypothesis testing problem with
\begin{align*}
&H_0 : \sP \\
&H_1 : \sQ.
\end{align*}
The adaptive adversary is specified by the functions $\hat{p}_t: \sX^{t-1} \rightarrow \sP$ and $\hat{q}_t: \sX^{t-1} \rightarrow \sQ$ for $t \in \{1,2,\ldots \}$. For $t=1$, the function just outputs an arbitrary distribution in $\sP$ (or $\sQ$). The adversary chooses the set of functions $(\hat{p}_t)_{t \in \mathbb{N}}$ (respectively $(\hat{q}_t)_{t \in \mathbb{N}}$) under $H_0$ (resp. $H_1$). The law of $X^t$ under $H_0$ is given by $\mathbb{P}(x^t)=\prod_{i=1}^{t}p(x_i|x^{i-1})$ where $p(.|x^{i-1})=\hat{p}_i(x^{i-1})$ and under $H_1$ by $\mathbb{Q}(x^t)=\prod_{i=1}^{t}q(x_i|x^{i-1})$ where $q(.|x^{i-1})=\hat{q}_i(x^{i-1})$. A test $\phi$ is defined by the pair $(\tau, Z)$, where $\tau$ is a stopping time adapted to the filtration $\mathcal{F}_0 \subseteq \mathcal{F}_1 \cdots \subseteq \mathcal{F}_t \cdots \subseteq \mathcal{F}$, $\mathcal{F}_t :=\ \sigma\{X_1,\ldots,X_t\}$ and $Z : \sX^{\tau} \rightarrow \{0,1\}$ is a $\mathcal{F}_{\tau}$-measurable function that specifies the decision rule. Here, $\sX^{\tau}$ denotes the set of all stopped sequences. For a particular choice of adversary strategy $(\hat{p}_t,\hat{q}_t)_{t \in \mathbb{N}}$ and for a given stopped sequence $x^{\tau}$, $\mathbb{P}$ and $\mathbb{Q}$ are given by 
\begin{equation}
\mathbb{P}(x^{\tau}) = \prod_{i=1}^{\tau} \hat{p}_i(x^{i-1})(x_i)
\end{equation}
\begin{equation}
\mathbb{Q}(x^{\tau}) = \prod_{i=1}^{\tau} \hat{q}_i(x^{i-1})(x_i).
\end{equation}
Note that $\hat{p}_i(x^{i-1})$ and $\hat{q}_i(x^{i-1})$ are distributions in $\mathcal{P}$ and $\mathcal{Q}$ respectively. For conciseness, let $\mathcal{A}$ denote the adversary strategy $(\hat{p}_t,\hat{q}_t)_{t \in \mathbb{N}}$. For a particular adversary strategy, the type-I (false alarm) and type-II (missed detection) error are defined as follows,
\begin{align*}
\pi_{1|0}(\phi,\mathcal{A}) &:= \mathbb{P}(Z = 1) \\
\pi_{0|1}(\phi,\mathcal{A}) &:= \mathbb{Q}(Z = 0).
\end{align*}
A pair of exponents $(E_0,E_1)$ is said to be \emph{achievable}, if there exists a sequence of tests $(\phi_n=(\tauN,Z_n))_{n \in \mathbb{N}}$ such that
\begin{align*}
E_0 &\le \liminf_{n \rightarrow \infty} -\frac{\log \sup_{\mathcal{A}_n} \pi_{1|0}(\phi_n, \mathcal{A}_n)}{\sup_{\mathcal{A}_n} \mathrm{E}_{\mathbb{P}}[\tau_n]} \\
E_1 &\le \liminf_{n \rightarrow \infty} -\frac{\log \sup_{\mathcal{A}_n} \pi_{0|1}(\phi_n, \mathcal{A}_n)}{\sup_{\mathcal{A}_n} \mathrm{E}_{\mathbb{Q}}[\tau_n]},
\end{align*}
and $\sup \limits_{\mathcal{A}_n} \mathrm{E}_{\mathbb{P}}[\tau_n] < \infty$, $\sup \limits_{\mathcal{A}_n} \mathrm{E}_{\mathbb{Q}}[\tau_n] < \infty$, and $\sup \limits_{\mathcal{A}_n} \mathrm{E}_{\mathbb{P}}[\tau_n]$, $\sup \limits_{\mathcal{A}_n} \mathrm{E}_{\mathbb{Q}}[\tau_n]$ are unbounded as $n \rightarrow \infty$.
Define $\mathcal{E}(\mathcal{P},\mathcal{Q})$ to be the set of all achievable $(E_0,E_1)$ pairs. We want to characterize this region.
\section{Results} \label{sec:results}
Let $(p_0^*,q_0^*)$ be the closest pair between $\sP$ and $\sQ$ in the forward Kullback-Leibler (KL) divergence, i.e. 
\begin{equation} \label{eqn:opt_pair1}
(p_0^*,q_0^*)  := \argmin \limits_{p \in \mathcal{P}, q \in \mathcal{Q}} D(p\|q).
\end{equation}
Let $(p_1^*,q_1^*)$ be the closest pair in the reverse KL divergence sense, i.e. 
\begin{equation} \label{eqn:opt_pair0}
(p_1^*,q_1^*)  := \argmin \limits_{p \in \mathcal{P}, q \in \mathcal{Q}} D(q\|p)
\end{equation}
Since $\mathcal{P}$ and $\mathcal{Q}$ are closed, convex sets and KL divergence is lower semi-continuous in its arguments, such minima exist. Since we assume that all distributions in $\sP$ and $\sQ$ are mutually absolutely continuous, let $\max \limits_{x \in \sX}|\log \frac{\pZeroStar(x)}{\qZeroStar(x)}| = c_0$ and $\max \limits_{x \in \sX}|\log \frac{\pOneStar(x)}{\qOneStar(x)}| = c_1$.

\par
We can extend the ideas in \cite{brandao2020adversarial} to get a complete (optimal) tradeoff curve for the fixed length adversarial hypothesis testing problem (refer Appendix~\ref{app:tradeoff}). In the sequential setting, we can get a strictly better tradeoff as shown in Figure~\ref{fig:improved_tradeoff}.
\begin{theorem} \label{thrm:bemv1}
	For sets $\sP,\sQ \subseteq \mathbb{R}^{\sX}$ which satisfy the properties A1 and A2, the set of achievable pairs of exponents is
	\begin{equation}
	\mathcal{E}(\mathcal{P},\mathcal{Q}) = \inb{ (E_0,E_1):E_0 E_1 \le D(\qZeroStar\|\pZeroStar)D(\pOneStar\|\qOneStar)}.
	\end{equation}
	Furthermore, all points on the curve $(E_0 E_1 = D(\qZeroStar\|\pZeroStar)D(\pOneStar\|\qOneStar))$ are achievable.
\end{theorem}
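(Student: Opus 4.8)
The plan is to prove the two inclusions separately, writing $D_0:=D(\pOneStar\|\qOneStar)=\min_{p\in\sP,q\in\sQ}D(p\|q)$ and $D_1:=D(\qZeroStar\|\pZeroStar)=\min_{p\in\sP,q\in\sQ}D(q\|p)$, both strictly positive by (A1)--(A2). The single structural ingredient is first-order optimality of the two extremal pairs over the convex sets: since $\pOneStar$ minimizes $p\mapsto D(p\|\qOneStar)$ on $\sP$ and $\qOneStar$ minimizes $q\mapsto D(\pOneStar\|q)$ on $\sQ$, one gets, for all $p\in\sP,q\in\sQ$, $\mathrm{E}_{p}\!\left[\log\tfrac{\pOneStar(X)}{\qOneStar(X)}\right]\ge D_0$ and $\mathrm{E}_{q}\!\left[\tfrac{\pOneStar(X)}{\qOneStar(X)}\right]\le 1$; symmetrically, from optimality of $\pZeroStar$ for $D(\qZeroStar\|\cdot)$ on $\sP$ and of $\qZeroStar$ for $D(\cdot\|\pZeroStar)$ on $\sQ$, $\mathrm{E}_{q}\!\left[\log\tfrac{\qZeroStar(X)}{\pZeroStar(X)}\right]\ge D_1$ and $\mathrm{E}_{p}\!\left[\tfrac{\qZeroStar(X)}{\pZeroStar(X)}\right]\le 1$. (Each ``$\le 1$'' is one sum read two ways, e.g. $\sum_x p(x)\tfrac{\qZeroStar(x)}{\pZeroStar(x)}=\sum_x \qZeroStar(x)\tfrac{p(x)}{\pZeroStar(x)}$.)

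For achievability I analyze a two-threshold SPRT driven by the walks $S^{(0)}_t:=\sum_{i\le t}\log\tfrac{\pOneStar(X_i)}{\qOneStar(X_i)}$ and $S^{(1)}_t:=\sum_{i\le t}\log\tfrac{\pZeroStar(X_i)}{\qZeroStar(X_i)}$: with thresholds $A_n,B_n\to\infty$, stop at $\tau_n:=\inf\{t:S^{(0)}_t\ge A_n\ \text{or}\ S^{(1)}_t\le -B_n\}$ and declare $H_0$ iff $S^{(0)}_{\tau_n}\ge A_n$. Under any $H_0$-adversary, $e^{-S^{(1)}_t}$ is a nonnegative supermartingale started at $1$ (using $\mathrm{E}_p[\qZeroStar/\pZeroStar]\le 1$), so the maximal inequality for supermartingales gives $\sup_{\mathcal{A}_n}\pi_{1|0}(\phi_n,\mathcal{A}_n)\le e^{-B_n}$; symmetrically, $e^{S^{(0)}_t}$ under $H_1$-adversaries gives $\sup_{\mathcal{A}_n}\pi_{0|1}\le e^{-A_n}$. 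For the run length, under any $H_0$-adversary $S^{(0)}_t-tD_0$ is a submartingale with bounded increments, and $\tau_n$ is dominated by the finite-mean hitting time of $A_n$ by $S^{(0)}$, so optional stopping gives $\sup_{\mathcal{A}_n}\mathrm{E}_{\mathbb{P}}[\tau_n]\le (A_n+c_1)/D_0$, and symmetrically $\sup_{\mathcal{A}_n}\mathrm{E}_{\mathbb{Q}}[\tau_n]\le (B_n+c_0)/D_1$. Hence $E_0\ge\liminf_n B_nD_0/(A_n+c_1)$ and $E_1\ge\liminf_n A_nD_1/(B_n+c_0)$, so taking $A_n,B_n\to\infty$ with $B_n/A_n\to\beta\in(0,\infty)$ makes the pair $(\beta D_0,\ D_1/\beta)$ achievable. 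As $\beta$ ranges over $(0,\infty)$ this sweeps the boundary hyperbola, and since the achievability conditions are monotone in $(E_0,E_1)$, every nonnegative pair with $E_0E_1\le D_0D_1$ is coordinatewise dominated by one of these, hence lies in $\mathcal{E}(\sP,\sQ)$. (The side conditions hold: finiteness is the Wald bound, and $\sup_{\mathcal{A}_n}\mathrm{E}_{\mathbb{P}}[\tau_n]\to\infty$ follows by evaluating the constant-$\pZeroStar$ adversary, whose run length is $\Theta(A_n)$, and similarly under $\sQ$.)

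For the converse, take any $(\phi_n)$ achieving $(E_0,E_1)$; we may assume $E_0,E_1>0$, which forces all worst-case error probabilities to $0$ and both worst-case expected run lengths to $\infty$. The key is to lower-bound the adversarial quantities by \emph{two different} constant adversaries and apply the sequential data-processing (Wald) inequality on the stopped filtration. Using the reverse-KL pair for the type-I side: $\sup_{\mathcal{A}_n}\pi_{1|0}\ge\mathbb{P}_{\pZeroStar}(Z_n=1)$, and since $D\big(\mathbb{Q}_{\qZeroStar}\,\big\|\,\mathbb{P}_{\pZeroStar}\big)$ on $\mathcal{F}_{\tau_n}$ equals $\mathrm{E}_{\qZeroStar}[\tau_n]\,D(\qZeroStar\|\pZeroStar)$ by Wald's identity, data processing onto $Z_n$ together with $x\log x\ge -1/e$ gives $-\log\sup_{\mathcal{A}_n}\pi_{1|0}\le (1+o(1))\,D_1\,\mathrm{E}_{\qZeroStar}[\tau_n]+O(1)$, whence $E_0\le D_1\,\liminf_n\dfrac{\sup_{\mathcal{A}_n}\mathrm{E}_{\mathbb{Q}}[\tau_n]}{\sup_{\mathcal{A}_n}\mathrm{E}_{\mathbb{P}}[\tau_n]}$. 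The mirror computation with the forward-KL pair $(\pOneStar,\qOneStar)$ yields $E_1\le D_0\,\liminf_n\dfrac{\sup_{\mathcal{A}_n}\mathrm{E}_{\mathbb{P}}[\tau_n]}{\sup_{\mathcal{A}_n}\mathrm{E}_{\mathbb{Q}}[\tau_n]}$. Multiplying the two bounds and using $\liminf_n a_n\cdot\liminf_n(1/a_n)\le 1$ for any positive sequence $(a_n)$ gives $E_0E_1\le D_0D_1$.

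I expect the converse to be the crux. A naive reduction that fixes a \emph{single} non-adversarial pair $(p,q)$ yields only $E_0E_1\le\min_{p,q}D(p\|q)D(q\|p)$, which is strictly weaker than $D_0D_1$; the fix is to route the type-I exponent through the reverse-KL pair $(\pZeroStar,\qZeroStar)$ and the type-II exponent through the forward-KL pair $(\pOneStar,\qOneStar)$, incur the ``price'' of the unknown ratio $\sup_{\mathcal{A}_n}\mathrm{E}_{\mathbb{Q}}[\tau_n]/\sup_{\mathcal{A}_n}\mathrm{E}_{\mathbb{P}}[\tau_n]$ on each side, and note that these two prices are reciprocals so their product is at most $1$. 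The remaining work---carrying overshoot and binary-entropy lower-order terms uniformly over the adaptive adversary---is routine given the bounded-increment constants $c_0,c_1$ and the divergence of the worst-case expected stopping times.
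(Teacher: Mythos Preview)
Your proposal is correct and follows essentially the same route as the paper. The achievability uses the identical two-threshold SPRT driven by the two extremal log-likelihood ratios, with the same structural inputs (the Pythagorean inequality for the drift bound and the first-order optimality condition $\mathrm{E}_p[\qZeroStar/\pZeroStar]\le 1$ for the supermartingale/error bound); your supermartingale maximal inequality is the same device as the paper's Lemma~\ref{lemma:ut_ub}. The converse likewise matches: two constant adversaries $(\pOneStar,\qOneStar)$ and $(\pZeroStar,\qZeroStar)$ combined with the sequential data-processing (Wald) inequality.

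The one place your write-up is actually \emph{cleaner} than the paper's is the final combination step. The paper multiplies the two DPI bounds and asserts an equality of products of $\liminf$'s (swapping denominators), which is not justified as stated. You instead upper-bound $\mathrm{E}_{\qZeroStar}[\tau_n]$ by $\sup_{\mathcal{A}}\mathrm{E}_{\mathbb{Q}}[\tau_n]$ (and symmetrically), so that both bounds involve the \emph{same} ratio $a_n=\sup_{\mathcal{A}}\mathrm{E}_{\mathbb{Q}}[\tau_n]/\sup_{\mathcal{A}}\mathrm{E}_{\mathbb{P}}[\tau_n]$, and then invoke $\liminf a_n\cdot\liminf(1/a_n)=\liminf a_n/\limsup a_n\le 1$. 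This is the right way to close the argument. One cosmetic point: for the ``unbounded expected stopping time'' side condition you don't need to pick a specific adversary; since increments are bounded, $\tau_n\ge \min(A_n/c_1,\,B_n/c_0)$ deterministically, which already gives divergence.
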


\par
We also study two related problems. In Section~\ref{app:prob_constraint}, we characterize the error exponents under a probability constraint on the stopping time. In Section~\ref{app:dual_problem}, we characterize the error exponents under a constraint on the probability of error.

\begin{figure}
    \tikzset{every picture/.style={line width=0.75pt}} 
\centering
\begin{tikzpicture}[x=0.75pt,y=0.75pt,yscale=-1,xscale=1]

\draw  (140.5,323.4) -- (509.5,323.4)(177.4,48) -- (177.4,354) (502.5,318.4) -- (509.5,323.4) -- (502.5,328.4) (172.4,55) -- (177.4,48) -- (182.4,55)  ;
\draw    (177.5,176) .. controls (197,260) and (302.5,324) .. (364.5,322) ;
\draw [line width=2.25]    (296.5,75) .. controls (316,159) and (421.5,223) .. (483.5,221) ;
\draw  [dash pattern={on 4.5pt off 4.5pt}]  (177.5,176) -- (363.5,174) ;
\draw  [dash pattern={on 4.5pt off 4.5pt}]  (364.5,174) -- (364.5,322) ;

\draw (465,348.4) node [anchor=north west][inner sep=0.75pt]  [font=\large]  {$E_{1}$};
\draw (142,75.4) node [anchor=north west][inner sep=0.75pt]  [font=\large]  {$E_{0}$};
\draw (83,163.4) node [anchor=north west][inner sep=0.75pt]  [font=\large]  {$D\left( q_{1}^{*} \| p_{1}^{*}\right)$};
\draw (324,335.4) node [anchor=north west][inner sep=0.75pt]  [font=\large]  {$D\left( p_{0}^{*} \| q_{0}^{*}\right)$};

\end{tikzpicture}
    \caption{The lighter curve shows the tradeoff between error exponents in the fixed length setting. The darker curve shows the tradeoff in the sequential setting. As we can see the pair $(E_1=D(\pZeroStar\|\qZeroStar),E_0=D(\qOneStar\|\pOneStar))$ can be achieved in the sequential case. }
    \label{fig:improved_tradeoff}
\end{figure}
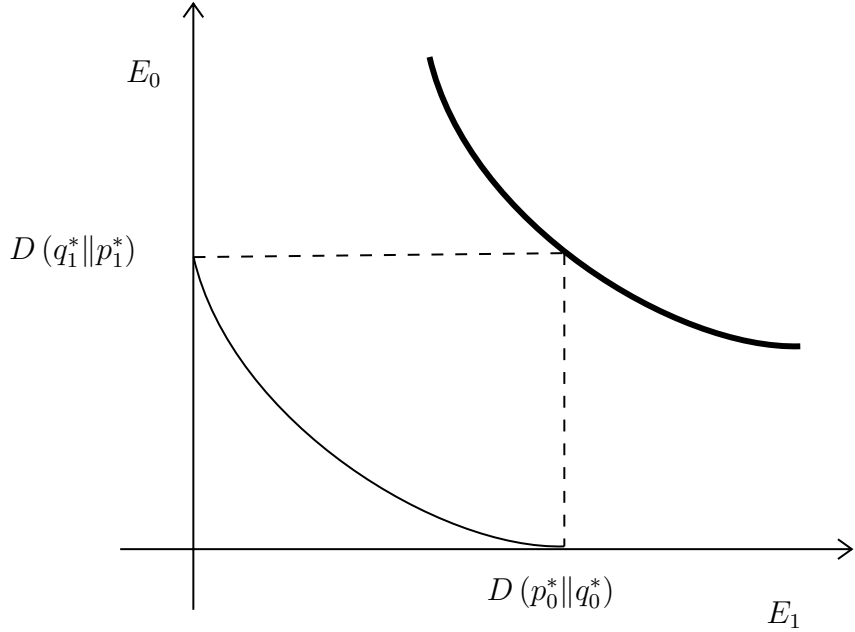

\section{Key Lemmas}
We first state two lemmas used in our proof.
\begin{lemma} \label{lemma:pyth_thrm}
	For any $p \in \mathcal{P}$ and $X$ distributed according to $p$,
	\begin{equation} \label{eqn:pyth_thrm1}
	\expec \insq{\log \frac{\pOneStar(X)}{\qOneStar(X)}} \ge D(\pOneStar\|\qOneStar),
	\end{equation}
	with equality if $p=\pOneStar$.
	For any $q \in \mathcal{Q}$ and $X$ distributed according to $q$,
	\begin{equation} \label{eqn:pyth_thrm0}
	\expec \insq{\log \frac{\qZeroStar(X)}{\pZeroStar(X)}} \ge D(\qZeroStar\|\pZeroStar).,
	\end{equation}
	with equality if $q=\qZeroStar$.
\end{lemma}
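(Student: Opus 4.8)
The plan is to read off both inequalities as first-order optimality conditions for the minimizing pairs over the convex sets $\sP$ and $\sQ$. A preliminary observation I would record is that (A2) forces all distributions in $\sP\cup\sQ$ to share a common support: mutual absolute continuity of every pair $(p,q)$ means $p$ and $q$ have equal support, and chaining this through one fixed $p\in\sP$ and one fixed $q\in\sQ$ propagates it to the whole union. We may therefore take this common support to be all of $\sX$. This guarantees that every log-likelihood ratio appearing below is finite and, more importantly, that the relative-entropy functions we differentiate along line segments are smooth with uniformly bounded derivatives, so term-by-term differentiation of the finite sums defining them is legitimate.

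To prove \eqref{eqn:pyth_thrm1}, fix $p\in\sP$ and, using convexity (A1), set $p_\lambda:=(1-\lambda)\pOneStar+\lambda p\in\sP$ for $\lambda\in[0,1]$. Let $g(\lambda):=D(p_\lambda\|\qOneStar)$. Since $(\pOneStar,\qOneStar)$ minimizes $D(p\|q)$ over $\sP\times\sQ$, and in particular over $\sP\times\{\qOneStar\}$, we have $g(\lambda)\ge g(0)$ for all $\lambda\in[0,1]$, hence $g'(0^+)\ge 0$. Differentiating the finite sum and using $\sum_{x}(p(x)-\pOneStar(x))=0$ gives
\[
g'(0^+)=\sum_{x\in\sX}\bigl(p(x)-\pOneStar(x)\bigr)\log\frac{\pOneStar(x)}{\qOneStar(x)},
\]
so the inequality $g'(0^+)\ge 0$ rearranges exactly to $\expec\insq{\log\frac{\pOneStar(X)}{\qOneStar(X)}}\ge D(\pOneStar\|\qOneStar)$ for $X\sim p$, which is \eqref{eqn:pyth_thrm1}.

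Inequality \eqref{eqn:pyth_thrm0} follows by the mirror-image argument: fix $q\in\sQ$, put $q_\lambda:=(1-\lambda)\qZeroStar+\lambda q\in\sQ$, let $h(\lambda):=D(q_\lambda\|\pZeroStar)$, and use optimality of $(\pZeroStar,\qZeroStar)$ for $\min D(q\|p)$ to get $h'(0^+)\ge 0$, which is \eqref{eqn:pyth_thrm0} after the same rearrangement. The equality assertions are immediate, since substituting $p=\pOneStar$ (resp. $q=\qZeroStar$) collapses the left-hand side into $D(\pOneStar\|\qOneStar)$ (resp. $D(\qZeroStar\|\pZeroStar)$) by definition. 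The only delicate point — and the one I would be most careful about — is justifying that $g'(0^+)$ (and $h'(0^+)$) exists, is finite, and equals the stated expression; without the common-support consequence of (A2) a coordinate with $\pOneStar(x)=0$ and $p(x)>0$ could give a $-\infty$ slope at $\lambda=0$, so that observation is exactly what makes the differentiation step go through. The rest is a routine computation of the directional derivative of a relative entropy.
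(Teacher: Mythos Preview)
Your proof is correct. The paper's argument invokes the Pythagorean inequality for KL divergence (Cover--Thomas, Theorem~11.6.1): since $\pOneStar$ is the I-projection of $\qOneStar$ onto the convex set $\sP$, one has $D(p\|\qOneStar)\ge D(p\|\pOneStar)+D(\pOneStar\|\qOneStar)$ for every $p\in\sP$, and expanding and rearranging the sums yields \eqref{eqn:pyth_thrm1} directly. Your route instead reads off \eqref{eqn:pyth_thrm1} from the first-order optimality condition along the segment $p_\lambda=(1-\lambda)\pOneStar+\lambda p$. These are two presentations of the same underlying idea --- the standard proof of the Pythagorean inequality itself proceeds exactly via the directional-derivative computation you wrote out --- so your argument can be viewed as unpacking that black box rather than citing it. The tradeoff: your version is self-contained and makes the roles of convexity (A1) and full support (A2) explicit, while the paper's is shorter and delegates the analytic work to a standard reference. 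Your attention to the common-support consequence of (A2) is well placed; the paper leans on it implicitly when manipulating the log-ratios.
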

\begin{proof}
	By \eqref{eqn:opt_pair1}, we see that $\pOneStar$ is (one of) the closest distribution in $\mathcal{P}$ to $\qOneStar$. Using the Pythagorean inequality for KL divergence \cite[Theorem 11.6.1]{cover1999elements}, we get
	\begin{equation*}
	D(p\|\qOneStar) \ge D(p\|\pOneStar) + D(\pOneStar\|\qOneStar), \hspace{10pt} p \in \mathcal{P}  
	\end{equation*}
	\begin{align*}
	\textup{i.e.}, \hspace{10pt} \sum_{x} p(x) \log \frac{p(x)}{\qOneStar(x)} \ge &\sum_{x} p(x) \log \frac{p(x)}{\pOneStar(x)} \\
	&+ \sum_{x} \pOneStar(x) \log \frac{\pOneStar(x)}{\qOneStar(x)}.
	\end{align*}
	By rearranging terms, we get \eqref{eqn:pyth_thrm1}. If $p=\pOneStar$, then the Pythagorean inequality becomes an equality. The proof of \eqref{eqn:pyth_thrm0} is similar.
\end{proof}

\begin{lemma}[Lemma 5, \cite{brandao2020adversarial}]
	For any $q \in \mathcal{Q}$ and $X \sim q$,
	\begin{equation} \label{eqn:likelihood_ineq1}
	\expec \insq{\frac{\pOneStar(X)}{\qOneStar(X)}} \le 1. 
	\end{equation}
	For any $p \in \mathcal{P}$ and $X \sim p$,
	\begin{equation} \label{eqn:likelihood_ineq2}
	\expec \insq{\frac{\qZeroStar(X)}{\pZeroStar(X)}} \le 1. 
	\end{equation}
\end{lemma}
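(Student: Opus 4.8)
The plan is to derive both bounds from the first-order optimality conditions of the convex programs \eqref{eqn:opt_pair1} and \eqref{eqn:opt_pair0}. I focus on \eqref{eqn:likelihood_ineq1}; the argument for \eqref{eqn:likelihood_ineq2} is symmetric. Since $(\pOneStar,\qOneStar)$ minimizes $D(p\|q)$ over $\mathcal{P}\times\mathcal{Q}$, in particular $\qOneStar$ is a minimizer of $q\mapsto D(\pOneStar\|q)$ over the convex set $\mathcal{Q}$. Fix an arbitrary $q\in\mathcal{Q}$ and, using convexity of $\mathcal{Q}$, form the feasible segment $q_\lambda:=(1-\lambda)\qOneStar+\lambda q\in\mathcal{Q}$ for $\lambda\in[0,1]$. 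By (A2), whenever $\pOneStar(x)>0$ we also have $\qOneStar(x)>0$ and $q(x)>0$, hence $q_\lambda(x)>0$ for every $\lambda\in[0,1]$; thus $g(\lambda):=D(\pOneStar\|q_\lambda)$ is finite and, the alphabet being finite, differentiable on $[0,1]$.

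Next I would use that, by global optimality of $(\pOneStar,\qOneStar)$, $g(\lambda)\ge g(0)$ for every $\lambda\in[0,1]$, so the right derivative satisfies $g'(0^{+})\ge 0$. Differentiating $g(\lambda)=\sum_{x}\pOneStar(x)\log\pOneStar(x)-\sum_{x}\pOneStar(x)\log q_\lambda(x)$ term by term gives
\begin{equation*}
g'(\lambda)=-\sum_{x\in\sX}\pOneStar(x)\,\frac{q(x)-\qOneStar(x)}{q_\lambda(x)},
\end{equation*}
so that, using $\sum_{x}\pOneStar(x)=1$,
\begin{equation*}
0\le g'(0^{+})=-\sum_{x\in\sX}\pOneStar(x)\,\frac{q(x)-\qOneStar(x)}{\qOneStar(x)}=1-\sum_{x\in\sX}q(x)\,\frac{\pOneStar(x)}{\qOneStar(x)}.
\end{equation*}
Rearranging, $\expec\insq{\pOneStar(X)/\qOneStar(X)}\le 1$ for $X\sim q$, which is \eqref{eqn:likelihood_ineq1}.

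For \eqref{eqn:likelihood_ineq2} I would run the same argument with the two sets interchanged: $(\pZeroStar,\qZeroStar)$ minimizes $D(q\|p)$, so $\pZeroStar$ minimizes $p\mapsto D(\qZeroStar\|p)$ over $\mathcal{P}$; taking $p_\lambda:=(1-\lambda)\pZeroStar+\lambda p$ for an arbitrary $p\in\mathcal{P}$ and imposing that the right derivative of $D(\qZeroStar\|p_\lambda)$ at $\lambda=0$ be nonnegative yields $1-\sum_{x}p(x)\,\qZeroStar(x)/\pZeroStar(x)\ge 0$, i.e. \eqref{eqn:likelihood_ineq2}.

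I expect the only (minor) obstacle to be the feasibility and regularity bookkeeping: convexity of $\mathcal{Q}$ is what keeps $q_\lambda$ inside $\mathcal{Q}$; global optimality of $\qOneStar$ over all of $\mathcal{Q}$ is what makes $\lambda=0$ a minimizer of $g$ on the whole interval $[0,1]$, so that the one-sided stationarity inequality is legitimate; and (A2) is what guarantees $g$ is finite and differentiable along the entire segment. The rest is the one-line derivative computation above. (This is precisely Lemma~5 of \cite{brandao2020adversarial}, so one may alternatively just cite it.)
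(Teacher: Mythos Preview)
Your argument is correct: the first-order optimality condition for $q\mapsto D(\pOneStar\|q)$ at $\qOneStar$ along the segment toward an arbitrary $q\in\mathcal{Q}$ gives exactly \eqref{eqn:likelihood_ineq1}, and the symmetric computation for $p\mapsto D(\qZeroStar\|p)$ gives \eqref{eqn:likelihood_ineq2}. The paper itself does not prove this lemma at all---it simply cites it as Lemma~5 of \cite{brandao2020adversarial}---so your directional-derivative proof is a self-contained replacement for that citation and is essentially the standard argument one would find in the reference.
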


For a random process $\{X_t\}_t$, $S_{0,t}$ and $S_{1,t}$ be the log likelihood functions of the first $t$ observations with respect to $(\pOneStar,\qOneStar)$ and $(\qZeroStar,\pZeroStar)$ respectively.
\begin{equation} \label{eqn:sum_k}
S_{0,t} := \sum_{i=1}^{t} \log \frac{\pOneStar(X_i)}{\qOneStar(X_i)}, \hspace{10pt}  S_{1,t} := \sum_{i=1}^{t} \log \frac{\qZeroStar(X_i)}{\pZeroStar(X_i)}.
\end{equation}
Define $\{M_{0,t}\}_{t \in \mathbb{N}}$ and $\{M_{1,t}\}_{t \in \mathbb{N}}$ to be the following processes
\begin{equation*}
M_{0,t} := S_{0,t} - tD(\pOneStar\|\qOneStar), \hspace{10pt} M_{1,t} := S_{1,t} - tD(\qZeroStar\|\pZeroStar).
\end{equation*}
Also, let $M_{0,0}:=0$, $M_{1,0}:=0$. \\
\textbf{Claim 1}: Under $H_0$ (i.e. when the observations are distributed according to $\mathbb{P}$), $\{M_{0,t}\}_t$ is a submartingale. If the observations are drawn i.i.d. according to $\pOneStar$, then $\{M_{0,t}\}_t$ is a martingale. Likewise for $\{M_{1,t}\}_t$ under $H_1$.\\
\textit{Proof}: We will prove the statement for $\{M_{0,t}\}_t$, the statement for $\{M_{1,t}\}_t$ follows by a symmetric argument.  
\begin{align*}
\mathbb{E}[M_{0,t}|X_{1}^{t-1}] &= \mathbb{E}[S_{0,t}|X_{1}^{t-1}] - tD(\pOneStar\|\qOneStar) \\
&= S_{0,t-1} \! + \! \mathbb{E}\insq{\log \frac{\pOneStar(X_t)}{\qOneStar(X_t)}\bigg|X_{1}^{t-1}} \! -  \! tD(\pOneStar\|\qOneStar) \\
&\ge S_{0,t-1} - (t-1)D(\pOneStar\|\qOneStar) \\
&= M_{0,t-1}, 
\end{align*}
where the inequality follows from lemma~\ref{lemma:pyth_thrm}. Observe that the penultimate step is an equality if the observations are drawn i.i.d. according to $\pOneStar$. \\
Let $\{\tilde{M}_{0,t}\}_{t \in \mathbb{N}}$ and $\{\tilde{M}_{1,t}\}_{t \in \mathbb{N}}$ be the stopped processes.
\begin{equation*}
\tilde{M}_{0,t} :=  M_{0,t \land \tau} \hspace{10pt} \tilde{M}_{1,t} :=  M_{1,t \land \tau}.
\end{equation*}
Here, $t \land \tau = \min(t,\tau)$. \\
\textbf{Claim 2:} Under $H_0$, $\{\tilde{M}_{0,t}\}_t$ is a submartingale. If the observations are drawn i.i.d. according to $\pOneStar$, then $\{\tilde{M}_{0,t}\}_t$ is a martingale. Likewise for $\{\tilde{M}_{1,t}\}_t$ under $H_1$.\\
\textit{Proof:} The claim follows from $M_{0,t}$ (resp. $M_{1,t}$) being a submartingale under $H_0$ (resp. $H_1$) by Claim 1 and the fact that a stopped process is a submartingale if the original process is a submartingale \cite[Theorem 3.10]{cinlar2011probability}.
\par
Consider the following stopping times.
\begin{align}
\tau_0 := \inf \{t \in \mathbb{N}: S_{0,t} \geq \theta_0\} \label{eqn:tau0} \\
\tau_1 := \inf \{t \in \mathbb{N}: S_{1,t} \geq \theta_1\} \label{eqn:tau1}
\end{align}
where $\theta_0,\theta_1>0$.
\begin{lemma} \label{lemma:exp_tau_ub}
	For any stopping rule, in particular for a stopping rule defined in \eqref{eqn:tau0}, we have
 \begin{equation} \label{eqn:ub_tau0}
	\mathbb{E}_{\mathbb{P}}[\tau_0] \le \frac{\expec_{\mathbb{P}} \insq{S_{0,\tau_0}}}{D(\pOneStar\|\qOneStar)}.
	\end{equation}
	Likewise, for any stopping rule, in particular for a stopping rule defined in \eqref{eqn:tau1}, we have
	\begin{equation} \label{eqn:ub_tau1}
	\mathbb{E}_{\mathbb{Q}}[\tau_1] \le \frac{\expec_{\mathbb{Q}} \insq{S_{1,\tau_1}}}{D(\qZeroStar\|\pZeroStar)}.
	\end{equation}
\end{lemma}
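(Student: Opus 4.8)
The plan is to obtain \eqref{eqn:ub_tau0} from the optional stopping theorem applied to the stopped submartingale $\{M_{0,t\land\tau}\}_t$ of Claim~2 and then to let $t\to\infty$; I will argue it for an arbitrary stopping time $\tau$ with $\mathbb{E}_{\mathbb{P}}[\tau]<\infty$ and afterwards check that $\tau_0$ has this property. First, since $\{M_{0,t\land\tau}\}_t$ is a submartingale under $H_0$ with $M_{0,0}=0$, we have $\mathbb{E}_{\mathbb{P}}[M_{0,t\land\tau}]\ge 0$ for every $t$, which upon unfolding $M_{0,t}=S_{0,t}-t\,D(\pOneStar\|\qOneStar)$ becomes
\[
\mathbb{E}_{\mathbb{P}}\insq{S_{0,t\land\tau}} \;\ge\; D(\pOneStar\|\qOneStar)\,\mathbb{E}_{\mathbb{P}}\insq{t\land\tau}, \qquad t\in\mathbb{N}.
\]

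Next I would let $t\to\infty$ in this inequality. On the right-hand side $t\land\tau\uparrow\tau$ pointwise, so monotone convergence gives $\mathbb{E}_{\mathbb{P}}[t\land\tau]\to\mathbb{E}_{\mathbb{P}}[\tau]$. On the left-hand side the key observation is that each increment $\log\frac{\pOneStar(X_i)}{\qOneStar(X_i)}$ is bounded in absolute value by $c_1<\infty$ --- this is precisely where the mutual absolute continuity assumption A2 enters --- so $|S_{0,t\land\tau}|\le c_1(t\land\tau)\le c_1\tau$, and when $\mathbb{E}_{\mathbb{P}}[\tau]<\infty$ this is an integrable dominating function; dominated convergence then yields $\mathbb{E}_{\mathbb{P}}[S_{0,t\land\tau}]\to\mathbb{E}_{\mathbb{P}}[S_{0,\tau}]$. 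Combining the two limits and using $D(\pOneStar\|\qOneStar)>0$ (which holds because $\sP\cap\sQ=\emptyset$) rearranges into \eqref{eqn:ub_tau0}.

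To make the above applicable to the stopping rule \eqref{eqn:tau0} itself, I must verify $\mathbb{E}_{\mathbb{P}}[\tau_0]<\infty$. By definition $S_{0,t}<\theta_0$ for all $t<\tau_0$, and one further observation moves $S_{0,\cdot}$ by at most $c_1$, so $S_{0,t\land\tau_0}\le\theta_0+c_1$ for every $t$; plugging this into the displayed inequality gives $D(\pOneStar\|\qOneStar)\,\mathbb{E}_{\mathbb{P}}[t\land\tau_0]\le\theta_0+c_1$ for all $t$, and monotone convergence upgrades this to $\mathbb{E}_{\mathbb{P}}[\tau_0]\le(\theta_0+c_1)/D(\pOneStar\|\qOneStar)<\infty$, so the previous paragraph applies with $\tau=\tau_0$. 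The companion bound \eqref{eqn:ub_tau1} for $\tau_1$ under $H_1$ follows verbatim with $\{M_{1,t}\}_t$, $S_{1,t}$, $D(\qZeroStar\|\pZeroStar)$ and the increment bound $c_0$ in place of the $H_0$ objects.

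The hard part --- really the only non-mechanical step --- is the passage to the limit on the left-hand side, i.e.\ establishing uniform integrability of $\{S_{0,t\land\tau}\}_t$. Here the boundedness of the log-likelihood increments coming from A2 makes the argument immediate via domination by $c_1\tau$; absent such a bound one would have to truncate and invoke Fatou, which only gives an inequality in one direction and would not suffice.
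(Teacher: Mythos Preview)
Your argument is correct and follows the same skeleton as the paper: both start from the submartingale inequality $\mathbb{E}_{\mathbb{P}}[S_{0,t\land\tau}] \ge D(\pOneStar\|\qOneStar)\,\mathbb{E}_{\mathbb{P}}[t\land\tau]$ coming from Claim~2, then pass to the limit via monotone/dominated convergence using the increment bound $|S_{0,t\land\tau}|\le c_1\tau$ from assumption~A2. The one genuine difference is how the auxiliary fact $\mathbb{E}_{\mathbb{P}}[\tau_0]<\infty$ is established. The paper bounds the tail $\mathbb{P}[\tau_0\ge i]$ by Azuma--Hoeffding applied to the submartingale $\{M_{0,t}\}$ and then sums the resulting exponentially decaying terms. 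You instead feed the deterministic overshoot bound $S_{0,t\land\tau_0}\le\theta_0+c_1$ directly back into the submartingale inequality to get $\mathbb{E}_{\mathbb{P}}[t\land\tau_0]\le(\theta_0+c_1)/D(\pOneStar\|\qOneStar)$ uniformly in $t$, and conclude by monotone convergence. Your route is shorter and avoids any concentration inequality; it also yields immediately the quantitative bound $\mathbb{E}_{\mathbb{P}}[\tau_0]\le(\theta_0+c_1)/D(\pOneStar\|\qOneStar)$ that the paper only extracts later when applying the lemma. The paper's route, by contrast, gives exponential tail control on $\tau_0$, which is more than needed for the lemma itself but is of independent interest.
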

\begin{proof}
	We will prove \eqref{eqn:ub_tau0}. The proof of \eqref{eqn:ub_tau1} follows by a symmetric argument. Recall that by Claim 2, $\{\tilde{M}_{0,t}\}_t$ is a submartingale under $H_0$. Thus, we have 
	\begin{equation*} 
	\mathbb{E}_{\mathbb{P}}[\tilde{M}_{0,t}] \ge \mathbb{E}_{\mathbb{P}}[\tilde{M}_{0,0}] = 0.
	\end{equation*}
	Plugging in the definition of $\tilde{M}_{0,t}$,
	\begin{equation} \label{eqn:submart}
	\mathbb{E}_{\mathbb{P}}[S_{0,t \land \tau_0}] \ge \mathbb{E}_{\mathbb{P}}[t \land \tau_0]D(\pOneStar\|\qOneStar).
	\end{equation}
	To obtain \eqref{eqn:ub_tau0} from this, we will make the following claim which is proved later. \\
	\textbf{Claim 3}: $\mathbb{E}_{\mathbb{P}}[\tau_0] < \infty$, $\mathbb{E}_{\mathbb{Q}}[\tau_1] < \infty$. \\
	Then, $\tau_0$ is absolutely integrable. Since $t \land \tau_0 \le \tau_0$, by the dominated convergence theorem,
	\begin{equation} \label{eqn:rhs_dct}
	\lim_{t \rightarrow \infty} \mathbb{E}_{\mathbb{P}}[t \land \tau_0] = \mathbb{E}_{\mathbb{P}}[\lim_{t \rightarrow \infty} t \land \tau_0] = \expec_{\mathbb{P}}[\tau_0].
	\end{equation}
	Since $|S_{0,t \land \tau_0}| \le (t \land \tau_0)c_0 \le \tau_{0}c_{0}$, by absolute integrability of $\tau_1$ and the dominated convergence theorem,
	\begin{equation} \label{eqn:lhs_dct}
	\lim_{t \rightarrow \infty} \mathbb{E}_{\mathbb{P}}[S_{0,t \land \tau_0}] = \expec_{\mathbb{P}}[\lim_{t \rightarrow \infty} S_{0,t \land \tau_0}] = \expec_{\mathbb{P}}[S_{0,\tau_0}].
	\end{equation}
	Observe that \eqref{eqn:ub_tau0} follows from \eqref{eqn:submart}, \eqref{eqn:rhs_dct}, \eqref{eqn:lhs_dct}. 
	It only remains to prove Claim 3. \\
	\textit{Proof of Claim 3:}
	We show $\expec_{\mathbb{P}}[\tau_{0}] < \infty$; $\expec_{\mathbb{Q}}[\tau_{1}] < \infty$ follows similarly.
	\begin{align*}
	\mathbb{E}_{\mathbb{P}}[\tau_{0}] &= \sum_{i=1}^{\infty} \mathbb{P}[\tau_{0} \ge i] \\
	&= \sum_{i=1}^{N} \mathbb{P}[\tau_{0} \ge i] + \sum_{i=N+1}^{\infty} \mathbb{P}[\tau_{0} \ge i]
	\end{align*}
	Here, $N$ is specified below. The first sum is clearly finite. We need to show that the tail sum is bounded. Towards that end, we examine each quantity in the tail sum. Define the events $E_j := \{S_{0,j} < \theta_0\}$, $j \in \mathbb{N}$.
	\begin{align*}
	&\mathbb{P}[\tau_{0} \ge i] \\
	&= \mathbb{P} \insq{\cap_{j=1}^{i-1} E_j} \\
	&\le \mathbb{P} \insq{E_{i-1}} \\
	&= \mathbb{P} \insq{S_{0,i-1} < \theta_0} \\
	&= \mathbb{P} \insq{M_{0,i-1} < \theta_0 -(i-1)D(\pOneStar\|\qOneStar)}
	\end{align*}
	Let $N$ be such that $N \ge \frac{\theta_0}{D(\pOneStar\|\qOneStar)}$. Then, for $i \ge N+1$, $\theta_0 -(i-1)D(\pOneStar\|\qOneStar) < 0$. By Claim 1, $\{M_{0,t}\}_t$ is a submartingale. Since $\max \limits_{x \in \sX}|\log \frac{\pOneStar(x)}{\qOneStar(x)}| = c_0$, we have $|M_{0,t}-M_{0,t-1}| \le 2c_0$. We apply Azuma-Hoeffding inequality for submartingale sequences with bounded differences to get,
	\begin{align*}
	\mathbb{P}[\tau_{0,n} \ge i] &\le \mathbb{P} \insq{M_{0,i-1} < \theta_0 -(i-1)D(\pOneStar\|\qOneStar)} \\
	&\le \exp\inp{ \frac{-(\theta_0 -(i-1)D(\pOneStar\|\qOneStar))^2}{2\sum_{k=1}^{i-1}(2c_0)^2} } \\
	&= \exp \inp{\frac{-(i-1)^2\inp{D(\pOneStar\|\qOneStar) - \frac{\theta_0}{i-1}}^2}{8(i-1)c_0^2}} \\
	&= \exp \inp{\frac{-(i-1)\inp{D(\pOneStar\|\qOneStar) - \frac{\theta_0}{i-1}}^2}{8c_0^2}}
	\end{align*}
	
	The tail sum can now be written as follows
	\begin{align*}
	&\sum_{i=N+1}^{\infty} \mathbb{P}[\tau_{0} \ge i] \\
	&\le \sum_{i=N+1}^{\infty}  \exp  \inp{ \frac{-(i-1)\inp{D(\pOneStar\|\qOneStar) - \frac{\theta_0}{i-1}}^2}{8c_0^2}} \\
	&\le \sum_{i=N+1}^{\infty}  \exp \inp{\frac{-(i-1)\inp{D(\pOneStar\|\qOneStar) - \frac{\theta_0}{N}}^2}{8c_0^2}} \\
	&= \sum_{i=N+1}^{\infty} \exp \inp{-(i-1)c_{\pOneStar,\qOneStar,\theta_0}} \\
	&= \frac{\exp \inp{-Nc_{\pOneStar,\qOneStar,\theta_0}}}{1-\exp(-c_{\pOneStar,\qOneStar,\theta_0})} \\
	&< \infty.
	\end{align*}
	Here, $c_{\pOneStar,\qOneStar,\theta_0} > 0$ is a constant that depends on $\pOneStar,\qOneStar$ and $\theta_0$. Thus, $\expec_{\mathbb{P}}[\tau_0]$ (and similarly $\expec_{\mathbb{Q}}[\tau_1]$) are finite and Claim 3 is proved.\\
\end{proof}

\begin{lemma} \label{lemma:ut_ub}
	For a stopping rule as defined in \eqref{eqn:tau1}, we have
	\begin{equation}
	\expec_{\mathbb{P}} \insq{2^{S_{1,\tau_1}}} \le 1
	\end{equation}
	For a stopping rule as defined in \eqref{eqn:tau0}, we have
	\begin{equation}
	\expec_{\mathbb{Q}} \insq{2^{S_{0,\tau_0}}} \le 1
	\end{equation}
\end{lemma}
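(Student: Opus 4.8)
The plan is to recognise $\{2^{S_{1,t}}\}_{t\ge 0}$ as a nonnegative supermartingale under $\mathbb{P}$ and then apply optional stopping at $\tau_1$. First I would verify the supermartingale property: conditioned on $\mathcal{F}_{t-1}$, the next observation $X_t$ is distributed according to $\hat p_t(X^{t-1})=:p\in\mathcal{P}$, so
\begin{equation*}
\expec_{\mathbb{P}}\insq{2^{S_{1,t}}\mid\mathcal{F}_{t-1}} \;=\; 2^{S_{1,t-1}}\,\expec_{\mathbb{P}}\insq{\qZeroStar(X_t)/\pZeroStar(X_t)\mid\mathcal{F}_{t-1}} \;\le\; 2^{S_{1,t-1}},
\end{equation*}
where the inequality is exactly \eqref{eqn:likelihood_ineq2} applied to the conditional law $p$. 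Because the increments $\log(\qZeroStar/\pZeroStar)$ are bounded in absolute value by $c_0$, we have $2^{S_{1,t}}\le 2^{c_0 t}$, so each $2^{S_{1,t}}$ is integrable and the conditioning is legitimate; note also $2^{S_{1,0}}=1$.

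Next I would pass to the stopped process $\{2^{S_{1,t\land\tau_1}}\}_{t\ge 0}$, which is again a nonnegative supermartingale (the supermartingale analogue of \cite[Theorem 3.10]{cinlar2011probability}, used for Claim~2). Hence $\expec_{\mathbb{P}}\insq{2^{S_{1,t\land\tau_1}}}\le\expec_{\mathbb{P}}\insq{2^{S_{1,0}}}=1$ for every $t\in\mathbb{N}$.

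Finally, being a nonnegative supermartingale, $\{2^{S_{1,t\land\tau_1}}\}_t$ converges $\mathbb{P}$-almost surely, with limit $2^{S_{1,\tau_1}}$ on the event $\{\tau_1<\infty\}$. Fatou's lemma then gives
\begin{equation*}
\expec_{\mathbb{P}}\insq{2^{S_{1,\tau_1}}\indicator_{\{\tau_1<\infty\}}} \;\le\; \expec_{\mathbb{P}}\insq{\liminf_{t\to\infty}2^{S_{1,t\land\tau_1}}} \;\le\; \liminf_{t\to\infty}\expec_{\mathbb{P}}\insq{2^{S_{1,t\land\tau_1}}} \;\le\; 1,
\end{equation*}
which is the first assertion (reading $2^{S_{1,\tau_1}}$ as this almost-sure limit, should $\{\tau_1=\infty\}$ carry positive mass; equivalently, one may quote Doob's optional stopping theorem for nonnegative supermartingales directly). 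The second assertion, $\expec_{\mathbb{Q}}\insq{2^{S_{0,\tau_0}}}\le 1$, follows by the symmetric argument under $\mathbb{Q}$, with $\{2^{S_{0,t}}\}_t$ a nonnegative supermartingale by \eqref{eqn:likelihood_ineq1} and bounded by $2^{c_1 t}$.

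I expect the last step to be the one requiring care: $\tau_1$ need not be bounded --- under $\mathbb{P}$ the increments $\log(\qZeroStar/\pZeroStar)$ have non-positive mean, so $S_{1,t}$ may never reach $\theta_1$ --- so optional stopping cannot be applied naively and must be routed through the stopped process together with the almost-sure convergence of nonnegative supermartingales. Everything else is an immediate consequence of the likelihood inequalities and the boundedness of the log-likelihood increments.
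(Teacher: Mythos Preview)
Your proposal is correct and follows essentially the same route as the paper: both establish that $U_t:=2^{S_{1,t\land\tau_1}}$ is a nonnegative supermartingale under $\mathbb{P}$ via \eqref{eqn:likelihood_ineq2}, deduce $\expec_{\mathbb{P}}[U_t]\le 1$ for all $t$, and then pass to the limit. The only difference is in the last step. The paper exploits the threshold structure of $\tau_1$ to observe that $S_{1,t\land\tau_1}\le \theta_1+c$ uniformly in $t$ (the process is below $\theta_1$ until it stops, and the overshoot is at most one bounded increment), and then applies the bounded convergence theorem. You instead invoke supermartingale convergence and Fatou's lemma. Your route is a bit more general --- it does not use the particular form of $\tau_1$ and it handles the event $\{\tau_1=\infty\}$ explicitly, a point the paper glosses over --- while the paper's uniform bound $U_t\le 2^{\theta_1+c}$ is a concrete structural observation that makes the limiting step completely elementary and is in the spirit of the overshoot bounds used elsewhere in the paper.
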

\begin{proof}
	Let $U_t := 2^{S_{1,t \land \tau_1}}$.
	\begin{align*}
	&\expec_{\mathbb{P}}[U_t] \\
	&= \expec_{\mathbb{P}} \insq{ U_{t}(\indicator_{\{\tau_1 \le t-1\}} + \indicator_{\{\tau_1 \ge t\}})} \\
	&\overset{(a)}{=} \expec_{\mathbb{P}} \insq{U_{t-1}\indicator_{\{\tau_1 \le t-1\}}} + \expec_{\mathbb{P}} \insq{U_{t-1}\frac{\qZeroStar(X_t)}{\pZeroStar(X_t)}\indicator_{\{\tau_1 \ge t\}}} \\
	& \! = \! \! \expec_{\mathbb{P}} \! \insq{U_{t-1}\indicator_{\{\tau_1 \le t-1\}}} \! + \! \expec \! \! \insq{\! \expec \! \insq{U_{t-1}\frac{\qZeroStar(X_t)}{\pZeroStar(X_t)}\indicator_{\{\tau_1 \ge t\}}\bigg|X^{t-1}}} \\
	&\! \overset{(b)}{=} \! \! \expec_{\mathbb{P}} \! \! \insq{U_{t-1}\indicator_{\{\tau_1 \le t-1\}}} \! \! +  \! \! \expec \! \! \insq{ U_{t-1}\indicator_{\{\tau_1 \ge t\}} \expec \! \insq{\frac{\qZeroStar(X_t)}{\pZeroStar(X_t)}\bigg|X^{t-1}}} \\
	&\overset{(c)}{\le} \expec_{\mathbb{P}} \insq{U_{t-1}\indicator_{\{\tau_1 \le t-1\}}} + \expec_{\mathbb{P}} \insq{U_{t-1}\indicator_{\{\tau_1 \ge t\}}} \\
	&= \expec_{\mathbb{P}}[U_{t-1}].
	\end{align*}
	The equality $(a)$ is because when $\tau_1 \le t-1$, $t \land \tau_1=(t-1) \land \tau_1$ and hence $U_t=U_{t-1}$. On the other hand, when $\tau_1 \ge t$, $t \land \tau_1=t$ and hence $U_t = U_{t-1}\frac{\qZeroStar(X_t)}{\pZeroStar(X_t)}$. $(b)$ follows from the fact that $U_{t-1}\indicator_{\{\tau_1 \ge t\}}$ is a function of $X_1,\ldots,X_{t-1}$. $(c)$ follows from \eqref{eqn:likelihood_ineq2}.
	Proceeding in a similar manner, we have
	\begin{equation} \label{eqn:U_t_non_increasing}
	\expec_{\mathbb{P}}[U_t] \le \expec_{\mathbb{P}}[U_{t-1}] \le \cdots \le \expec_{\mathbb{P}}[U_1] = \expec_{\mathbb{P}} \insq{\frac{\qZeroStar(X_1)}{\pZeroStar(X_1)}} \le 1.
	\end{equation}
	The final inequality follows from \eqref{eqn:likelihood_ineq2}.
	We take the limit as $t \rightarrow \infty$. The limit exists since $\expec_{\mathbb{P}}[U_t]$ is non-increasing in $t$ and lower bounded by $0$.
	\begin{equation} \label{eqn:before_dct1}
	\lim_{t \rightarrow \infty} \expec_{\mathbb{P}}[U_t] = \lim_{t \rightarrow \infty} \expec_{\mathbb{P}}\insq{2^{S_{1,t \land \tau_1}}} \le 1.
	\end{equation}
	Recall the definition $\tau_{1} := \inf \inb{t \in \mathbb{N}: S_{1,t} \geq \theta_1}$. Hence, for all $t \in \mathbb{N}$,
	\begin{align*}
	S_{1,t \land \tau_{1}} &\le S_{0,\tau_{1}} \\
	&= \sum_{i=1}^{\tau_{1}-1}\log \frac{\qZeroStar(X_i)}{\pZeroStar(X_i)} + \log \frac{\qZeroStar(X_{\tau_1})}{\pZeroStar(X_{\tau_1})} \\
	&\le \theta_1 + c_1.	
	\end{align*}
	The first inequality follows from the definition of $\tau_1$. The second inequality follows from the fact that at the penultimate step the sum $S_{1,\tau_{1}-1}$ cannot exceed the threshold and the final increment $\log \frac{\qZeroStar(X_{\tau_1})}{\pZeroStar(X_{\tau_1})}$ is upper bounded by $c_1$.
	Thus, we have 
	\begin{equation*}
	0 \le 2^{S_{1,t \land \tau_{1}}} \le 2^{\theta_1 + c_1}.
	\end{equation*}
	Hence, by \eqref{eqn:before_dct1} and by the bounded convergence theorem,	
	\begin{equation*}
	\expec_{\mathbb{P}} \insq{2^{S_{1,\tau_{1}}}} \le 1.
	\end{equation*}
\end{proof}
\section{Proof of Theorem~\ref{thrm:bemv1}} \label{sec:proofs}
 
\subsection{Proof of achievability of Theorem~\ref{thrm:bemv1}}
    Fix a point on the curve $(E_0 E_1 = D(\qZeroStar\|\pZeroStar)D(\pOneStar\|\qOneStar))$. It has the form $E_0=\frac{\alpha_1}{\alpha_0}D(\pOneStar\|\qOneStar)$, $E_1=\frac{\alpha_0}{\alpha_1}D(\qZeroStar\|\pZeroStar)$. We show a sequence of tests which achieves this point. Consider the tests $\phi_n=(\tauN,Z_n)$, $n \in \mathbb{N}$ as defined below. The stopping time is given by,
	\begin{equation} \label{eqn:tauN}
	\tau_n := \inf \{t \in \mathbb{N}: S_{0,t} \geq \alpha_0 n \hspace{5pt} \textup{or}\hspace{5pt} S_{1,t} \geq \alpha_1 n\}.
	\end{equation}
	The decision rule is defined to be
	\begin{equation} \label{eqn:decision_rule}
	Z_n(x^{\tauN}) := \left\{
	\begin{array}{ll}
	0 & \mbox{if } S_{0,\tauN} \geq \alpha_0 n, S_{1,\tauN} < \alpha_1 n\\
	1 & \mbox{if } S_{1,\tauN} \geq \alpha_1 n.
	\end{array}
	\right.
	\end{equation}
	We need to show that
    \begin{align*}
        \liminf_{n \rightarrow \infty} -\frac{\log \sup \limits_{\mathcal{A}_n} \pi_{1|0}(\phi_n, \mathcal{A}_n)}{\sup \limits_{\mathcal{A}_n} \mathrm{E}_{\mathbb{P}}[\tau_n]} &\ge \frac{\alpha_1}{\alpha_0}D(\pOneStar\|\qOneStar) \\ 
        \liminf_{n \rightarrow \infty} -\frac{\log \sup \limits_{\mathcal{A}_n} \pi_{0|1}(\phi_n, \mathcal{A}_n)}{\sup \limits_{\mathcal{A}_n} \mathrm{E}_{\mathbb{Q}}[\tau_n]} &\ge \frac{\alpha_0}{\alpha_1}D(\qZeroStar\|\pZeroStar).
    \end{align*}
	Define $\tau_{0,n}$ and $\tau_{1,n}$ to be
	\begin{align}
    \tau_{0,n} &:= \inf \inb{t \in \mathbb{N}: S_{0,t} \geq \alpha_0 n}. \label{eqn:tau0N_exp_constraint} \\
	\tau_{1,n} &:= \inf \inb{t \in \mathbb{N}: S_{1,t} \geq \alpha_1 n} \label{eqn:tau1N_exp_constraint}
	\end{align}
	We now bound the type-I error probability, $\pi_{1|0}(\phi_n,\mathcal{A}_n)$.
	Let $E \subseteq \mathcal{F}_{\tau_n}$ be the error event under $H_0$. Thus, $E = \{ S_{1,\tau_n} \ge \alpha_1 n\} = \{ \tau_{n} = \tau_{1,n}\}$.
	\begin{align*}
	\pi_{1|0}(\phi_n,\mathcal{A}_n) &= \mathbb{P} (E) \\
	&= \expec_{\mathbb{P}}[\mathrm{1}_{E}] \\
	&= \expec_{\mathbb{P}}\insq{\mathrm{1}_{E} 2^{-S_{1,\tau_{1,n}}} 2^{S_{1,\tau_{1,n}}}} \\
	&\le 2^{- \alpha_1 n} \expec_{\mathbb{P}} \insq{\mathrm{1}_{E} 2^{S_{1,\tau_{1,n}}}} \\
	&\le 2^{- \alpha_1 n} \expec_{\mathbb{P}} \insq{2^{S_{1,\tau_{1,n}}}}.
	\end{align*}
	Using lemma \ref{lemma:ut_ub} with $\tau_1=\tau_{1,n}$, we have $\expec_{\mathbb{P}} \insq{2^{S_{1,\tau_{1,n}}}} \le 1$. 
	Thus, we have $\pi_{1|0}(\phi_n,\mathcal{A}_n) \le 2^{- \alpha_1 n}$ for all adversary strategies $\mathcal{A}_n$. Using lemma \ref{lemma:exp_tau_ub} with $\tau_0=\tau_{0,n}$, we have
    \begin{align*}
        \mathbb{E}_{\mathbb{P}}[\tau_n] \le \mathbb{E}_{\mathbb{P}}[\tau_{0,n}] &\le \frac{\expec_{\mathbb{P}} \insq{S_{0,\tau_{0,n}}}}{D(\pOneStar\|\qOneStar)} \\
        &\le \frac{\alpha_0 n + c_1}{D(\pOneStar\|\qOneStar)}.
    \end{align*}
    The final inequality follows from the fact that at the penultimate step the sum $S_{1,\tau_{1,n}-1}$ cannot exceed the threshold and the final increment $\log \frac{\pOneStar(X_{\tau_0})}{\qOneStar(X_{\tau_0})}$ is upper bounded by $c_0$.
    Thus,
	\begin{equation*}
	\liminf_{n \rightarrow \infty} -\frac{\log \sup \limits_{\mathcal{A}_n} \pi_{1|0}(\phi_n, \mathcal{A}_n)}{\sup \limits_{\mathcal{A}_n} \mathrm{E}_{\mathbb{P}}[\tau_n]} \ge \frac{\alpha_1}{\alpha_0}D(\pOneStar\|\qOneStar).
	\end{equation*}
	By a symmetric argument, we get
    \begin{equation*}
	\liminf_{n \rightarrow \infty} -\frac{\log \sup \limits_{\mathcal{A}_n} \pi_{0|1}(\phi_n, \mathcal{A}_n)}{\sup \limits_{\mathcal{A}_n} \mathrm{E}_{\mathbb{Q}}[\tau_n]} \ge \frac{\alpha_0}{\alpha_1}D(\qZeroStar\|\pZeroStar).
	\end{equation*}
    The above inequalities taken together complete the proof of achievability.
	\subsection{Proof of converse of Theorem~\ref{thrm:bemv1}}
	Assume that the pair $(E_0,E_1)$ is achievable such that $E_0>0,E_1>0$.	Consider the following adversary strategy, $\mathcal{A}_n'$. Fix $\hat{p}_i(.)=\pOneStar$ for all $i \in \mathbb{N}$, i.e. under $H_0$ the observations are drawn i.i.d. according to $\pOneStar$. Similarly, fix $\hat{q}_i(.)=\qOneStar$ for all $i \in \mathbb{N}$, i.e. under $H_1$ the observations are drawn according to $\qOneStar$. We follow a standard converse argument \cite{PolyanskiyWu24} which we repeat here for completeness. By data processing inequality (the processing is through the decision rule), we have
	\begin{equation} \label{eqn:dpi}
	D(\bern(\pi_{1|0}(\phi_n,\mathcal{A}_n'))\|\bern(1-\pi_{0|1}(\phi_n,\mathcal{A}_n'))) \le D(\mathbb{P}\|\mathbb{Q}).
	\end{equation} 
	The L.H.S. can be lower bounded as follows
	\begin{align*}
	&D(\bern(\pi_{1|0}(\phi_n,\mathcal{A}_n'))\|\bern(1-\pi_{0|1}(\phi_n,\mathcal{A}_n'))) \\
	&= \pi_{1|0}(\phi_n,\mathcal{A}_n') \log \frac{\pi_{1|0}(\phi_n,\mathcal{A}_{n}')}{1 \!-\! \pi_{0|1}(\phi_n,\mathcal{A}_{n}')} \\
	&\hspace{12pt}+ \! (1 \!-\! \pi_{1|0}(\phi_n,\mathcal{A}_n')) \log \frac{1 \!-\! \pi_{1|0}(\phi_n,\mathcal{A}_n')}{\pi_{0|1}(\phi_n,\mathcal{A}_n')} \\
	&= -h(\pi_{1|0}(\phi_n,\mathcal{A}_n')) - \pi_{1|0}(\phi_n,\mathcal{A}_n') \log (1 - \pi_{0|1}(\phi_n,\mathcal{A}_n')) \\
	&\hspace{12pt} - (1-\pi_{1|0}(\phi_n,\mathcal{A}_n')) \log \pi_{0|1}(\phi_n,\mathcal{A}_n') \\
	&\ge -h(\pi_{1|0}(\phi_n,\mathcal{A}_n')) - (1-\pi_{1|0}(\phi_n,\mathcal{A}_n')) \log \pi_{0|1}(\phi_n,\mathcal{A}_n'),
	\end{align*}
	where the last inequality follows from the non-negativity of $-\pi_{1|0}(\phi_n,\mathcal{A}') \log (1-\pi_{0|1}(\phi_n,\mathcal{A}'))$. The R.H.S. can be written as follows
	\begin{align*}
	D(\mathbb{P}\|\mathbb{Q}) &= \expec_{\mathbb{P}}\insq{S_{0,\tauN}} \\
	&\overset{(a)}{=} \expec_{\mathbb{P}}[\tauN] D(\pOneStar\|\qOneStar)
	\end{align*} 
	The equality $(a)$ follows from the following: By Claims 1 and 2, $\{M_{0,t}\}_{t}$ and $\{\tilde{M}_{1,t}\}_{t}$ are martingales when the observations are i.i.d. according to $\pOneStar$. Thus, \eqref{eqn:submart} is an equality in this case. Since $\tau_n$ is a valid stopping rule, we have $\expec_{\mathbb{P}}[\tau_n] < \infty$. We conclude by applying the dominated convergence theorem.
	Applying these bounds to \eqref{eqn:dpi},
	\begin{align*}
	-\frac{1}{\expec_{\mathbb{P}}[\tauN]} \log \pi_{0|1}(\phi_n,\mathcal{A}_n') \le \frac{D(\pOneStar\|\qOneStar) + \frac{h(\pi_{1|0}(\phi_n,\mathcal{A}_n'))}{\expec_{\mathbb{P}}[\tauN]}}{1-\pi_{1|0}(\phi_n,\mathcal{A}_n')}.  
	\end{align*}
    Since $E_0 > 0$, $\pi_{1|0}(\phi_n,\mathcal{A}_n') \rightarrow 0$ as $n \rightarrow \infty$. Also, since $\expec_{\mathbb{P}}[\tauN]$ is unbounded as $n \rightarrow \infty$,  we have 
	\begin{equation} \label{eqn:conv_e1}
	\liminf_{n \rightarrow \infty} -\frac{1}{\expec_{\mathbb{P}}[\tauN]} \log \pi_{0|1}(\phi_n,\mathcal{A}_n') \le D(\pOneStar\|\qOneStar).
	\end{equation}
	Now, consider an alternate adversary strategy, $\mathcal{A}_n''$. Fix $\hat{p}_i(.)=\pZeroStar$ for all $i \in \mathbb{N}$, i.e. under $H_0$, the observations are i.i.d. according to $\pZeroStar$. Similarly, fix $\hat{q}_i(.)=\qZeroStar$ for all $i \in \mathbb{N}$, i.e. under $H_1$, the observations are i.i.d. according to $\qZeroStar$. By data processing inequality, we have
	\begin{equation}
	D(\bern(1-\pi_{0|1}(\phi_n,\mathcal{A}_n''))\|\bern(\pi_{1|0}(\phi_n,\mathcal{A}_n''))) \! \le \! D(\mathbb{Q}\|\mathbb{P}).
	\end{equation}
	Proceeding in a similar manner as before, if $E_1 > 0$,
	\begin{equation} \label{eqn:conv_e0}
	\liminf_{n \rightarrow \infty} -\frac{1}{\expec_{\mathbb{Q}}[\tauN]} \log \pi_{1|0}(\phi_n,\mathcal{A}_n'') \le D(\qZeroStar\|\pZeroStar).
	\end{equation}
	Taken together \eqref{eqn:conv_e1} and \eqref{eqn:conv_e0} imply that
    \begin{align*}
        &\inp{\liminf_{n \rightarrow \infty} -\frac{\log \pi_{0|1}(\phi_n,\mathcal{A}_n')}{\expec_{\mathbb{P}}[\tauN]}} \inp{\liminf_{n \rightarrow \infty} -\frac{\log \pi_{1|0}(\phi_n,\mathcal{A}_n'')}{\expec_{\mathbb{Q}}[\tauN]}} \\
        &=\inp{\liminf_{n \rightarrow \infty} -\frac{\log \pi_{0|1}(\phi_n,\mathcal{A}_n')}{\expec_{\mathbb{Q}}[\tauN]}} \inp{\liminf_{n \rightarrow \infty} -\frac{\log \pi_{1|0}(\phi_n,\mathcal{A}_n'')}{\expec_{\mathbb{P}}[\tauN]}} \\
        &\le D(\pOneStar\|\qOneStar)D(\qZeroStar\|\pZeroStar).
    \end{align*}
    By the definition of achievability of a pair of exponents $(E_0,E_1)$, we have
    \begin{align*}
        E_0 &\le \liminf_{n \rightarrow \infty} -\frac{\log \sup_{\mathcal{A}_n} \pi_{1|0}(\phi_n, \mathcal{A}_n)}{\sup_{\mathcal{A}_n} \mathrm{E}_{\mathbb{P}}[\tau_n]} \\
        &\le \liminf_{n \rightarrow \infty} -\frac{\log \pi_{1|0}(\phi_n,\mathcal{A}_n'')}{\expec_{\mathbb{P}}[\tauN]},
    \end{align*}
    and
    \begin{align*}
        E_1 &\le \liminf_{n \rightarrow \infty} -\frac{\log \sup_{\mathcal{A}_n} \pi_{0|1}(\phi_n, \mathcal{A}_n)}{\sup_{\mathcal{A}_n} \mathrm{E}_{\mathbb{Q}}[\tau_n]}, \\
        &\le \liminf_{n \rightarrow \infty} -\frac{\log \pi_{0|1}(\phi_n,\mathcal{A}_n')}{\expec_{\mathbb{Q}}[\tauN]}.
    \end{align*}
    Thus, $E_0 E_1 \le D(\pOneStar\|\qOneStar)D(\qZeroStar\|\pZeroStar)$.
\section{Probability constraint on the stopping time} \label{app:prob_constraint}
We now study the problem under the constraint that the stopping time will not exceed $n$ with high probability as in \cite{pan2022asymptotics}. A pair of exponents $(\bar{E}_0,\bar{E}_1)$ is said to be \emph{achievable}, if there exists a sequence of tests $(\phi_n=(\tauN,Z_n))_{n \in \mathbb{N}}$ such that
\begin{align*}
\bar{E}_0 &\le \liminf_{n \rightarrow \infty} -\frac{1}{n} \log \sup_{\mathcal{A}_n} \pi_{1|0}(\phi_n, \mathcal{A}_n) \\
\bar{E}_1 &\le \liminf_{n \rightarrow \infty} -\frac{1}{n} \log \sup_{\mathcal{A}_n} \pi_{0|1}(\phi_n, \mathcal{A}_n),
\end{align*}
subject to the following constraint: for any $0 < \epsilon < 1$, there exists an integer $n_0(\epsilon)$ such that for all $n > n_0(\epsilon)$, the stopping time $\tauN$ satisfies the following constraint,
\begin{equation}
\sup_{\mathcal{A}_n} \mathbb{P}(\tauN > n) < \epsilon \hspace{10pt} \textup{and} \hspace{10pt} \sup_{\mathcal{A}_n} \mathbb{Q}(\tauN > n) < \epsilon.
\end{equation}
Define $\bar{\mathcal{E}}(\mathcal{P},\mathcal{Q})$ to be the set of all achievable $(\bar{E}_0,\bar{E}_1)$ pairs. Notice that this is a closed set. We want to characterize this region.
\begin{theorem} \label{thrm:bemv2}
	For sets $\sP,\sQ \subseteq \mathbb{R}^{\sX}$ which satisfy the properties A1 and A2, the closure of the set of achievable pairs of exponents is
	\begin{equation}
	\bar{\mathcal{E}}(\mathcal{P},\mathcal{Q}) = \inb{ (\bar{E}_0,\bar{E}_1):\bar{E}_0 \le D(\qZeroStar\|\pZeroStar), \bar{E}_1 \le D(\pOneStar\|\qOneStar)}.
	\end{equation}
	Furthermore, the corner point $(D(\pOneStar\|\qOneStar),D(\qZeroStar\|\pZeroStar))$ is achievable.
\end{theorem}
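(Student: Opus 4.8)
The plan is to prove that the corner point $(D(\qZeroStar\|\pZeroStar),D(\pOneStar\|\qOneStar))$ is achievable --- since achievability only asks that the two $\liminf$ quantities be \emph{at least} $\bar E_0$ and $\bar E_1$, the same test sequence then witnesses every point of the claimed rectangle --- and to complement this with a matching converse via a change of measure against the two extremal i.i.d.\ adversaries. For achievability I would reuse the two-threshold test of Section~\ref{sec:proofs}, rescaling the thresholds to grow linearly in $n$ with a vanishing relative slack: fix $\delta_n\downarrow0$ with $n\delta_n^2\to\infty$ (e.g.\ $\delta_n=n^{-1/4}$), set $\theta_{0,n}:=n(D(\pOneStar\|\qOneStar)-\delta_n)$ and $\theta_{1,n}:=n(D(\qZeroStar\|\pZeroStar)-\delta_n)$, and take
\begin{equation*}
\tauN:=\inf\set{t\in\mathbb{N}:S_{0,t}\ge\theta_{0,n}\ \textup{or}\ S_{1,t}\ge\theta_{1,n}},
\end{equation*}
with $Z_n=0$ when $S_{0,\tauN}\ge\theta_{0,n}$ and $S_{1,\tauN}<\theta_{1,n}$, and $Z_n=1$ when $S_{1,\tauN}\ge\theta_{1,n}$. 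The error bounds then go through exactly as in the achievability proof of Theorem~\ref{thrm:bemv1}: the type-I error event under $H_0$ is $\set{\tauN=\tau_{1,n}}$, so Lemma~\ref{lemma:ut_ub} gives $\pi_{1|0}(\phi_n,\mathcal{A}_n)\le2^{-\theta_{1,n}}\expec_{\mathbb{P}}\insq{2^{S_{1,\tau_{1,n}}}}\le2^{-\theta_{1,n}}$ for \emph{every} adversary, whence $-\tfrac1n\log\sup_{\mathcal{A}_n}\pi_{1|0}(\phi_n,\mathcal{A}_n)\ge D(\qZeroStar\|\pZeroStar)-\delta_n\to D(\qZeroStar\|\pZeroStar)$; symmetrically (using the companion inequality in Lemma~\ref{lemma:ut_ub}) $-\tfrac1n\log\sup_{\mathcal{A}_n}\pi_{0|1}(\phi_n,\mathcal{A}_n)\ge D(\pOneStar\|\qOneStar)-\delta_n$. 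For the stopping constraint, $\tauN\le\tau_{0,n}$ gives $\set{\tauN>n}\subseteq\set{S_{0,n}<\theta_{0,n}}=\set{M_{0,n}<-n\delta_n}$; since $\{M_{0,t}\}_t$ is a submartingale under \emph{every} $H_0$ adversary (Claim~1) with increments bounded by $2c_0$, the Azuma--Hoeffding estimate used in the proof of Claim~3 gives $\sup_{\mathcal{A}_n}\mathbb{P}(\tauN>n)\le\exp(-n\delta_n^2/(8c_0^2))\to0$, and likewise $\sup_{\mathcal{A}_n}\mathbb{Q}(\tauN>n)\le\exp(-n\delta_n^2/(8c_1^2))\to0$, so the probability constraint holds for every $\epsilon$ once $n$ is large; the same bound with the horizon sent to infinity shows $\tauN<\infty$ a.s., so $Z_n$ is well defined.

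For the converse, take $(\bar E_0,\bar E_1)$ achievable with $\bar E_0,\bar E_1>0$ (if a coordinate is $0$ the corresponding inequality is vacuous, as $D(\qZeroStar\|\pZeroStar),D(\pOneStar\|\qOneStar)>0$, and degenerate test sequences are ruled out by the standard convention that the worst-case error probabilities vanish). Fix $\epsilon\in(0,\tfrac13)$ and $\eta>0$. To bound $\bar E_1$, use the adversary $\mathcal{A}_n'$ that is i.i.d.\ $\pOneStar$ under $H_0$ and i.i.d.\ $\qOneStar$ under $H_1$, and put $\mathcal{D}:=\set{Z_n=0}$. By A2, on $\mathcal{F}_n$ the law i.i.d.\ $\qOneStar$ has density $2^{-S_{0,n}}$ with respect to the law i.i.d.\ $\pOneStar$, and since $\mathcal{D}\cap\set{\tauN\le n}\in\mathcal{F}_n$,
\begin{equation*}
\pi_{0|1}(\phi_n,\mathcal{A}_n')=\mathbb{Q}(\mathcal{D})\ge\mathbb{Q}\inp{\mathcal{D}\cap\set{\tauN\le n}}=\expec_{\mathbb{P}}\insq{\indicator_{\mathcal{D}\cap\set{\tauN\le n}}\,2^{-S_{0,n}}}.
\end{equation*}
Intersecting the event further with $\set{S_{0,n}\le n(D(\pOneStar\|\qOneStar)+\eta)}$ and using (i) $\mathbb{P}(\mathcal{D})=1-\pi_{1|0}(\phi_n,\mathcal{A}_n')\to1$ (because $\bar E_0>0$ forces $\sup_{\mathcal{A}_n}\pi_{1|0}(\phi_n,\mathcal{A}_n)\to0$), (ii) $\mathbb{P}(\tauN>n)<\epsilon$, and (iii) Hoeffding's inequality for the bounded i.i.d.\ sum $S_{0,n}$ of mean $nD(\pOneStar\|\qOneStar)$, this event has $\mathbb{P}$-probability at least $1-3\epsilon$ for $n$ large, so $\pi_{0|1}(\phi_n,\mathcal{A}_n')\ge(1-3\epsilon)\,2^{-n(D(\pOneStar\|\qOneStar)+\eta)}$. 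Hence $\liminf_n-\tfrac1n\log\pi_{0|1}(\phi_n,\mathcal{A}_n')\le D(\pOneStar\|\qOneStar)+\eta$; letting $\eta\downarrow0$ and using $\pi_{0|1}(\phi_n,\mathcal{A}_n')\le\sup_{\mathcal{A}_n}\pi_{0|1}(\phi_n,\mathcal{A}_n)$ yields $\bar E_1\le D(\pOneStar\|\qOneStar)$. The mirror-image argument with the adversary $\mathcal{A}_n''$ (i.i.d.\ $\pZeroStar$ and $\qZeroStar$) and the process $S_{1,n}$, now lower-bounding $\pi_{1|0}(\phi_n,\mathcal{A}_n'')=\mathbb{P}(\set{Z_n=1})$ by change of measure from i.i.d.\ $\qZeroStar$, gives $\bar E_0\le D(\qZeroStar\|\pZeroStar)$.

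The step I expect to be most delicate is the converse, because the probability constraint is only an ``$\epsilon$-budget'': $\sup_{\mathcal{A}_n}\mathbb{P}(\tauN>n)<\epsilon$ holds for each \emph{fixed} $\epsilon$, not uniformly $o(1)$, so one cannot simply truncate $\phi_n$ at time $n$ and invoke the Chernoff--Stein lemma --- the additive $\epsilon$ picked up in the type-II probability would swamp the $2^{-nD}$ term. The change-of-measure bound above circumvents this because the $(1-3\epsilon)$ loss enters \emph{multiplicatively} and contributes only $O(1)$ to $-\log\pi_{0|1}(\phi_n,\mathcal{A}_n')$, negligible after dividing by $n$; indeed any fixed $\epsilon<\tfrac13$ works and one never needs $\epsilon\to0$. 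The remaining items are routine: the density identity on $\mathcal{F}_n$ (mutual absolute continuity, A2), the measurability $\mathcal{D}\cap\set{\tauN\le n}\in\mathcal{F}_n$, and the observation that here only the weak law of large numbers for $S_{0,n}$ under the i.i.d.\ adversary is needed --- not the martingale/Wald identity used in the converse of Theorem~\ref{thrm:bemv1}.
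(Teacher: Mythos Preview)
Your proof is correct. The achievability argument matches the paper's closely: you both use the two-threshold SPRT-like test, bound the error probabilities via Lemma~\ref{lemma:ut_ub}, and verify the stopping constraint via the submartingale Azuma--Hoeffding bound from Claim~1. Your use of $\delta_n\downarrow0$ with $n\delta_n^2\to\infty$ (rather than a fixed $\delta>0$) is a small refinement that directly exhibits achievability of the corner point by a single sequence of tests, whereas the paper's fixed-$\delta$ scheme only reaches the corner after letting $\delta\to0$.

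The converse, however, takes a genuinely different route. The paper invokes a Neyman--Pearson-type variational bound (Lemma~\ref{lemma:LiTan}, from Li--Tan) that compares $\mathbb{P}[F]$ and $\mathbb{Q}[F]$ for an $\mathcal{F}_{\tau_n}$-measurable event $F$ via the likelihood ratio at the \emph{random} time $\tau_n$, and then splits the resulting event $\set{S_{0,\tauN}\ge -\log\lambda}$ over $\{\tauN\le n\}$ and $\{\tauN>n\}$. You instead perform a direct change of measure on the \emph{deterministic} $\sigma$-algebra $\mathcal{F}_n$, using the explicit density $d\mathbb{Q}/d\mathbb{P}\big|_{\mathcal{F}_n}=2^{-S_{0,n}}$ together with the observation that $\mathcal{D}\cap\{\tauN\le n\}\in\mathcal{F}_n$. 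Both arguments exploit the probability budget $\mathbb{P}(\tauN>n)<\epsilon$ in the same multiplicative way (so the loss is $O(1)$ rather than additive $\epsilon$ in the error probability), but yours is self-contained and avoids tracking the likelihood ratio at a random time; in exchange, the Li--Tan lemma packages the change-of-measure step once and for all. Functionally the two yield the same bound. Like the paper, you only fully treat the case $\bar E_0,\bar E_1>0$; your one-line dismissal of the degenerate boundary is a little quick, since the bound on $\bar E_1$ uses $\bar E_0>0$ to force $\pi_{1|0}\to0$, but this is a presentational point rather than a gap in the method.
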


\subsection{Proof of achievability in Theorem~\ref{thrm:bemv2}}
	The stopping time is given by,
	\begin{equation} \label{eqn:tauN_prob_constraint}
	\tau_n := \inf \{t \in \mathbb{N}: S_{1,t} \geq n(D(\pOneStar\|\qOneStar)-\delta) \hspace{5pt} \textup{or} \\ \hspace{5pt} S_{0,t} \geq n(D(\qZeroStar\|\pZeroStar)-\delta)\},
	\end{equation}
	where $\delta > 0$. Let $T_1:=n(D(\pOneStar\|\qOneStar)-\delta_1)$, $T_2 := n(D(\qZeroStar\|\pZeroStar)-\delta_0)$ be the two thresholds. The decision rule is defined to be
	\begin{equation} \label{eqn:decision_rule_prob_constraint}
	Z_n(x^{\tauN}) := \left\{
	\begin{array}{ll}
	0 & \mbox{if } S_{1,\tauN} \geq T_1, S_{0,\tauN} < T_2\\
	1 & \mbox{if } S_{0,\tauN} \geq T_2.
	\end{array}
	\right.
	\end{equation}.
	Recall that $\mathcal{A}_n$ denote the adversary strategy $(\hat{p}_t,\hat{q})_{t \in \mathbb{N}}$. We need to show that for every $\epsilon \in (0,1)$, there exists an integer $n_0(\epsilon)$ such that,\\
	(i) $\sup \limits_{\mathcal{A}_n} \mathbb{P}(\tauN > n) < \epsilon$ and $\sup \limits_{\mathcal{A}_n} \mathbb{Q}(\tauN > n) < \epsilon$.\\
	(ii) \begin{align*}
	\liminf_{n \rightarrow \infty} -\frac{1}{n} \log \sup_{\mathcal{A}_n} \pi_{1|0}(\phi_n, \mathcal{A}_n) &= D(\qZeroStar\|\pZeroStar) \\
	\liminf_{n \rightarrow \infty} -\frac{1}{n} \log \sup_{\mathcal{A}_n} \pi_{0|1}(\phi_n, \mathcal{A}_n) &= D(\pOneStar\|\qOneStar).
	\end{align*}
	\textit{Proof of (i)}: We will show $\sup \limits_{\mathcal{A}_n} \mathbb{P}(\tauN > n) < \epsilon$, the other case follows by symmetry. Define $\tau_{0,n}$ and $\tau_{1,n}$ to be
	\begin{align}
	\tau_{0,n} &:= \inf \inb{t \in \mathbb{N}: S_{0,t} \geq n(D(\qZeroStar\|\pZeroStar)-\delta)} \label{eqn:tau0N_prob_constraint} \\
	\tau_{1,n} &:= \inf \inb{t \in \mathbb{N}: S_{1,t} \geq n(D(\pOneStar\|\qOneStar)-\delta)}. \label{eqn:tau1N_prob_constraint}
	\end{align}
	Observe that $\tauN=\tau_{0,n} \land \tau_{1,n}$. Thus, $\mathbb{P}(\tauN > n) \le \mathbb{P}(\tauZeroN > n)$. It suffices to bound $\mathbb{P}(\tauZeroN > n)$.
	\begin{align*}
	\mathbb{P}(\tauZeroN > n) &= \mathbb{P} \insq{\cap_{j=1}^{n} \{S_{0,j} < n(D(\qZeroStar\|\pZeroStar)-\delta) \}} \\
	&\le \mathbb{P} \insq{S_{0,n} < n(D(\qZeroStar\|\pZeroStar)-\delta)} \\
	&= \mathbb{P} \insq{M_{0,n} < -n\delta} \\
	&\overset{(a)}{\le} \exp \insq{-\frac{n^2\delta^2}{2\sum_{k=1}^{n}(2c_0)^2}} \\
	&= \exp \insq{-\frac{n\delta^2}{8c_0^2}}
	\end{align*}
	The inequality (a) follows from the fact that $M_{0,t}$ is a submartingale (Claim 1) and Azuma-Hoeffding inequality for bounded differences. Thus, for every $\epsilon \in (0,1)$, there exists an integer $n(\epsilon,\mathcal{A}_n)$ such that the R.H.S. in the above equation is less than $\epsilon$. We can pick $n_0(\epsilon)$ to be $\sup \limits_{\mathcal{A}_n} n(\epsilon,\mathcal{A}_n)$. 
	\\
	\textit{Proof of (ii)}: Let $E \subseteq \mathcal{F}_{\tau_n}$ be the error event under $H_0$. Thus, $E = \{ S_{0,\tau_n} \ge n(D(\qZeroStar\|\pZeroStar)-\delta)\} = \{ \tau_{n} = \tau_{0,n}\}$.
	\begin{align*}
	\pi_{1|0}(\phi_n,\mathcal{A}_n) &= \mathbb{P} (E) \\
	&= \expec_{\mathbb{P}}[\mathrm{1}_{E}] \\
	&= \expec_{\mathbb{P}}\insq{\mathrm{1}_{E} 2^{-S_{0,\tau_{0,n}}} 2^{S_{0,\tau_{0,n}}}} \\
	&\le 2^{-n(D(\qZeroStar\|\pZeroStar)-\delta)} \expec_{\mathbb{P}} \insq{\mathrm{1}_{E} 2^{S_{0,\tau_{0,n}}}} \\
	&\le 2^{-n(D(\qZeroStar\|\pZeroStar)-\delta)} \expec_{\mathbb{P}} \insq{2^{S_{0,\tau_{0,n}}}}.
	\end{align*}
	Using lemma \ref{lemma:ut_ub} with $\tau_0=\tau_{0,n}$, we have $\expec_{\mathbb{P}} \insq{2^{S_{0,\tau_{0,n}}}} \le 1$. Thus, we have $\pi_{1|0}(\phi_n,\mathcal{A}_n) \le 2^{-n(D(\qZeroStar\|\pZeroStar)-\delta)}$ for all adversary strategies $\mathcal{A}_n$. This implies
	\begin{equation*}
	\liminf_{n \rightarrow \infty} -\frac{1}{n} \log \sup_{\mathcal{A}_n} \pi_{1|0}(\phi_n, \mathcal{A}_n) \ge D(\qZeroStar\|\pZeroStar).
	\end{equation*}
	The analysis for type-II error probability follows from symmetry.
	\subsection{Proof of converse in Theorem~\ref{thrm:bemv2}}
	We use a lemma from \cite{li2020second}.
	\begin{lemma}[{\cite[Lemma 3]{li2020second}}] \label{lemma:LiTan}
		Consider the binary hypothesis testing probem- $H_0:p$ vs $H_1:q$. Let $\phi=(\tau,Z)$ be a sequential test. Let $\mathbb{P}$ and $\mathbb{Q}$ be the i.i.d. probability measures on $\mathcal{F}_{\tau}$ associated with $p$ and $q$ respectively. Suppose $\mathbb{P}[\tau < \infty] = 1$ and $\mathbb{Q}[\tau < \infty] = 1$. Then, for any event $F \in \mathcal{F}_{\tau}$, $\lambda > 0$ we have
		\begin{align}
			\mathbb{P}[F] - \lambda \mathbb{Q}[F] &\le \mathbb{P}\insq{\sum_{i=1}^{\tau} \frac{q(X_i)}{p(X_i)} \le -\log \lambda} \label{eqn:LiTan_1} \\
			\mathbb{Q}[F] - \frac{1}{\lambda} \mathbb{P}[F] &\le \mathbb{Q}\insq{\sum_{i=1}^{\tau} \frac{q(X_i)}{p(X_i)} \ge - \log \lambda}. \label{eqn:LiTan_2}
		\end{align}
	\end{lemma}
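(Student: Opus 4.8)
The plan is to prove Lemma~\ref{lemma:LiTan} by a change-of-measure argument followed by a single pointwise estimate. Write $L_t := \prod_{i=1}^{t}\frac{q(X_i)}{p(X_i)}$ for the likelihood ratio of the first $t$ samples; since $p$ and $q$ are mutually absolutely continuous with the same finite support, $L_t$ is strictly positive, and on $\{\tau<\infty\}$ we may define $L_\tau$ pathwise. The first step is to establish the optional change-of-measure identities: for every $F\in\mathcal{F}_\tau$,
\[
\mathbb{Q}[F] = \mathbb{E}_{\mathbb{P}}\!\left[\mathbf{1}_F\,L_\tau\right], \qquad \mathbb{P}[F] = \mathbb{E}_{\mathbb{Q}}\!\left[\mathbf{1}_F\,L_\tau^{-1}\right].
\]
I would prove these by slicing on the events $\{\tau=n\}$: since $F\cap\{\tau=n\}\in\mathcal{F}_n$ and $L_n$ is the Radon--Nikodym derivative of $\mathbb{Q}|_{\mathcal{F}_n}$ with respect to $\mathbb{P}|_{\mathcal{F}_n}$ (both being i.i.d.\ product measures on the first $n$ coordinates), we have $\mathbb{Q}[F\cap\{\tau=n\}]=\mathbb{E}_{\mathbb{P}}[\mathbf{1}_{F\cap\{\tau=n\}}L_n]=\mathbb{E}_{\mathbb{P}}[\mathbf{1}_{F\cap\{\tau=n\}}L_\tau]$; summing over $n\in\mathbb{N}$ and using $\mathbb{P}[\tau<\infty]=\mathbb{Q}[\tau<\infty]=1$ together with monotone convergence yields both identities.

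Given the identities, \eqref{eqn:LiTan_1} follows from
\[
\mathbb{P}[F]-\lambda\,\mathbb{Q}[F] = \mathbb{E}_{\mathbb{P}}\!\left[\mathbf{1}_F(1-\lambda L_\tau)\right] \le \mathbb{E}_{\mathbb{P}}\!\left[\mathbf{1}_{\{1-\lambda L_\tau>0\}}\right] = \mathbb{P}\!\left[L_\tau<\lambda^{-1}\right],
\]
where the inequality holds pointwise because $0\le\mathbf{1}_F\le 1$ and $1-\lambda L_\tau\le 1$ on $\{1-\lambda L_\tau>0\}$, while $\mathbf{1}_F(1-\lambda L_\tau)\le 0$ on the complement. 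Since $\{L_\tau<\lambda^{-1}\}=\{\log L_\tau<-\log\lambda\}\subseteq\{\log L_\tau\le-\log\lambda\}$ and $\log L_\tau=\sum_{i=1}^{\tau}\log\frac{q(X_i)}{p(X_i)}$, this gives \eqref{eqn:LiTan_1} (reading the sum in the statement as this log-likelihood ratio). The second bound \eqref{eqn:LiTan_2} is the mirror image: starting from $\mathbb{Q}[F]-\lambda^{-1}\mathbb{P}[F]=\mathbb{E}_{\mathbb{Q}}[\mathbf{1}_F(1-\lambda^{-1}L_\tau^{-1})]\le\mathbb{Q}[L_\tau^{-1}<\lambda]=\mathbb{Q}[L_\tau>\lambda^{-1}]$ and rewriting the event through $\log L_\tau$.

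The only step needing care is the optional change-of-measure identity itself, for a stopping time that is almost surely finite but not bounded: this is exactly where the hypotheses $\mathbb{P}[\tau<\infty]=\mathbb{Q}[\tau<\infty]=1$ enter, ensuring both that $L_\tau$ is well defined almost surely under each measure and that the countable decomposition over $\{\tau=n\}$ exhausts the probability. Once that identity is in place, the remainder is a routine pointwise argument with no sharp estimate required.
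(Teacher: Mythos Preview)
The paper does not prove this lemma; it is quoted verbatim from \cite[Lemma~3]{li2020second} and used as a black box in the converse of Theorem~\ref{thrm:bemv2}. So there is no in-paper proof to compare against.

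Your argument is correct and is in fact the standard one. The only two substantive ingredients are (i) the optional change-of-measure identity $\mathbb{Q}[F]=\mathbb{E}_{\mathbb{P}}[\mathbf{1}_F L_\tau]$ for $F\in\mathcal{F}_\tau$, which you justify cleanly by slicing on $\{\tau=n\}$ and invoking $\mathbb{P}[\tau<\infty]=\mathbb{Q}[\tau<\infty]=1$, and (ii) the elementary pointwise bound $\mathbf{1}_F(1-\lambda L_\tau)\le \mathbf{1}_{\{L_\tau<\lambda^{-1}\}}$. Both are handled correctly. Your parenthetical remark that the sum in the lemma statement should be read as $\sum_{i=1}^\tau \log\frac{q(X_i)}{p(X_i)}$ is also right; as written in the paper the $\log$ is missing, but the comparison with $-\log\lambda$ and the subsequent use of the lemma make the intended meaning unambiguous.
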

	Assume that the pair $(\bar{E}_0,\bar{E}_1)$ is achievable by the family of tests $(\phi_n=(\tauN,Z_n))_n$. $(\phi_n)_n$ satisfy the constraint that for any $\epsilon \in (0,1)$, there exists a large enough $n$ such that $\sup \limits_{\mathcal{A}_n} \mathbb{P}(\tauN > n) < \epsilon$ and $\sup \limits_{\mathcal{A}_n} \mathbb{Q}(\tauN > n) < \epsilon$. Consider the following adversary strategy, $\mathcal{A}_n$. Fix $\hat{p}_i(.)=\pOneStar$ for all $i \in \mathbb{N}$, i.e. under $H_0$ the observations are drawn i.i.d. according to $\pOneStar$. Similarly, fix $\hat{q}_i(.)=\qOneStar$ for all $i \in \mathbb{N}$, i.e. under $H_1$ the observations are drawn according to $\qOneStar$. We invoke equation \ref{eqn:LiTan_2}, lemma \ref{lemma:LiTan} with $p = \pZeroStar$, $q = \qZeroStar$. Let $E \subseteq \mathcal{F}_{\tau_n}$ be the error event under $H_0$. Thus, $E = \{ Z_n = 1\}$. Recall that $S_{0,\tauN} = \sum_{i=1}^{\tauN} \frac{\qZeroStar(X_i)}{\pZeroStar(X_i)}$.
	\begin{align*}
	\mathbb{Q}[E] - \frac{1}{\lambda} \mathbb{P}[E] &\le \mathbb{Q}\insq{S_{0,\tauN} \ge -\log \lambda}
	\end{align*}
	Thus, we have
	\begin{align*}
	1-\mathbb{Q}[E^c] - \frac{1}{\lambda} \mathbb{P}[E] &\le \mathbb{Q}\insq{S_{0,\tauN} \ge - \log \lambda, \tauN \le n} \\
	&\hspace{10pt}+ \mathbb{Q}\insq{S_{0,\tauN} \ge - \log \lambda,\tauN > n} \\
	&\le \mathbb{Q}\insq{S_{0,\tauN} \ge - \log \lambda, \tauN \le n}\\
	& \hspace{10pt} + \mathbb{Q}\insq{\tauN > n}.
	\end{align*}
	Rearranging the terms, we get
	\begin{align} 
	\log \mathbb{P}[E] \ge &\log \bigg[ \lambda \bigg(1 - \mathbb{Q}[E^c] - \mathbb{Q}\insq{\tauN > n} \nonumber \\ 
	&- \mathbb{Q}\insq{S_{0,\tauN} \ge - \log \lambda, \tauN \le n} \bigg) \bigg]. \label{eqn:prob_constraint_conv_ineq}
	\end{align}
	Let $\delta > 0$ and $\log \lambda = -n(D(\qZeroStar\|\pZeroStar) + \delta)$. We bound the following term
	\begin{align}
	\mathbb{Q}\insq{S_{0,\tauN} \! \ge \! - \! \log \lambda, \! \tauN \! \le \! n} &\le \mathbb{Q} \! \insq{\bigcap \limits_{i=1}^{n} \{S_{0,i} \! \ge \! n(D(\qZeroStar\|\pZeroStar) \! + \! \delta)\}} \nonumber\\
	& \le \mathbb{Q}\insq{\{S_{0,n} \! \ge \! n(D(\qZeroStar\|\pZeroStar) \! + \! \delta)\}} \nonumber\\
	&= \mathbb{Q}\insq{\{M_{0,n} \! \ge \! n\delta\}} \nonumber\\
	&\overset{(a)}{\le} \exp \insq{-\frac{n^2\delta^2}{2\sum_{k=1}^{n}(2c_0)^2}} \nonumber\\
	&= \exp \insq{-\frac{n\delta^2}{8c_0^2}} \label{eqn:prob_constraint_conv_term1}
	\end{align}
	$(a)$ follows from Claim 1 and Azuma-Hoeffding inequality for submartingale sequences with bounded differences. Thus,
	\begin{equation} \label{eqn:prob_constraint_conv_term1_limit}
	\lim_{n \rightarrow \infty} \mathbb{Q}\insq{S_{0,\tauN} \! \ge \! - \! \log \lambda, \! \tauN \! \le \! n} = 0
	\end{equation}
	From \eqref{eqn:prob_constraint_conv_ineq}, \eqref{eqn:prob_constraint_conv_term1}, we have for large enough $n$, 
	\begin{align*}
	-\frac{1}{n} \log \mathbb{P}[E] \le &D(\qZeroStar\|\pZeroStar) + \delta \\
	&- \frac{1}{n} \log \bigg(1 - \mathbb{Q}[E^c] - \epsilon - \exp \insq{-\frac{n\delta^2}{8c_0^2}} \bigg)
	\end{align*}
	Since $\bar{E}_1 > 0$, $\lim \limits_{n \rightarrow \infty} \mathbb{Q}[E^c] = 0$. Using \eqref{eqn:prob_constraint_conv_term1_limit}, we get
	\begin{equation} \label{eqn:prob_constraint_conv_e0}
	\bar{E}_0 = \lim_{n \rightarrow \infty} \frac{1}{n} \log \mathbb{P}[E] = D(\qZeroStar\|\pZeroStar)+\delta.
	\end{equation}
	Now, consider an alternate adversary strategy, $\mathcal{A}_n'$. Fix $\hat{p}_i(.)=\pOneStar$ for all $i \in \mathbb{N}$, i.e. under $H_0$ the observations are drawn i.i.d. according to $\pOneStar$. Similarly, fix $\hat{q}_i(.)=\qOneStar$ for all $i \in \mathbb{N}$, i.e. under $H_1$ the observations are drawn according to $\qOneStar$. We invoke equation \ref{eqn:LiTan_1}, lemma \ref{lemma:LiTan} with $p = \pOneStar$, $q = \qOneStar$. Let $E' \subseteq \mathcal{F}_{\tau_n}$ be the error event under $H_0$. Thus, $E' = \{ Z_n = 0\}$. Repeating the same steps, we get
	\begin{equation} \label{eqn:prob_constraint_conv_e1}
	\bar{E}_1 = \lim_{n \rightarrow \infty} \frac{1}{n} \log \mathbb{Q}[E'] = D(\pOneStar\|\qOneStar)+\delta.
	\end{equation}
	We now let $\delta \rightarrow 0$. Taken together \eqref{eqn:prob_constraint_conv_e0} and \eqref{eqn:prob_constraint_conv_e1} imply that for all pairs $(\bar{E}_0,\bar{E}_1)$ such that $\bar{E}_0,\bar{E}_1 > 0$, $\bar{E}_0 \le D(\qZeroStar\|\pZeroStar), \bar{E}_1 \le D(\pOneStar\|\qOneStar)$.

\section{Constraint on the probability of error} \label{app:dual_problem}
Alternatively, we can put a constraint on the type-I and type-II errors.
A pair of exponents $(\tilde{E}_0,\tilde{E}_1)$ is said to be \emph{achievable}, if there exists a sequence of tests $(\phi_{\beta}=(\tau_{\beta},Z_{\beta}))_{\beta}$ such that
\begin{align*}
\tilde{E}_0 &\le \liminf_{\beta \rightarrow 0} \inf_{\mathcal{A}_\beta} \frac{-\log \beta}{\expec_{\mathbb{P}}[\tau_{\beta}]} \\
\tilde{E}_1 &\le \liminf_{\beta \rightarrow 0} \inf_{\mathcal{A}_\beta} \frac{-\log \beta}{\expec_{\mathbb{Q}}[\tau_{\beta}]}
\end{align*}
and $\sup_{\mathcal{A}_{\beta}} \pi_{1|0}(\phi_{\beta},\mathcal{A}_{\beta}) \le \beta$, $\sup_{\mathcal{A}_{\beta}} \pi_{0|1}(\phi_{\beta},\mathcal{A}_{\beta}) \le \beta$. In this setting, we have the following result.
\begin{theorem} \label{thrm:bemv3}
	For sets $\sP,\sQ \subseteq \mathbb{R}^{\sX}$ which satisfy the properties A1 and A2, the closure of the set of achievable pairs of exponents is
	\begin{equation}
	\tilde{\mathcal{E}}(\mathcal{P},\mathcal{Q}) = \inb{ (\tilde{E}_0,\tilde{E}_1):\tilde{E}_0 \le D(\pOneStar\|\qOneStar), \tilde{E}_1 \le D(\qZeroStar\|\pZeroStar)}.
	\end{equation}
	Furthermore, the corner point $(D(\pOneStar\|\qOneStar),D(\qZeroStar\|\pZeroStar))$ is achievable.
\end{theorem}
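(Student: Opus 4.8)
\emph{Overall strategy.} The plan is to re-use the modified SPRT of Theorem~\ref{thrm:bemv1}, but now with a single pair of thresholds calibrated to $-\log\beta$, and to re-use the data-processing converse of Theorem~\ref{thrm:bemv1} run against the two extremal i.i.d.\ adversaries. Because the claimed region $\tilde{\mathcal{E}}(\mathcal{P},\mathcal{Q})$ is a closed rectangle with corner $(D(\pOneStar\|\qOneStar),D(\qZeroStar\|\pZeroStar))$ and is trivially downward closed, it suffices to (i)~exhibit a test sequence attaining the corner and (ii)~show that any achievable $(\tilde{E}_0,\tilde{E}_1)$ with $\tilde{E}_0,\tilde{E}_1>0$ obeys $\tilde{E}_0\le D(\pOneStar\|\qOneStar)$ and $\tilde{E}_1\le D(\qZeroStar\|\pZeroStar)$.

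\emph{Achievability of the corner.} Take the stopping time and decision rule of \eqref{eqn:tauN}--\eqref{eqn:decision_rule} with both thresholds equal to $-\log\beta$, call the test $\phi_\beta$, and set $\tau_{0,\beta}:=\inf\{t:S_{0,t}\ge -\log\beta\}$ and $\tau_{1,\beta}:=\inf\{t:S_{1,t}\ge -\log\beta\}$. Repeating verbatim the error analysis from the achievability part of Theorem~\ref{thrm:bemv1} (insert $2^{-S_{1,\tau_{1,\beta}}}2^{S_{1,\tau_{1,\beta}}}$ on the type-I error event, then apply Lemma~\ref{lemma:ut_ub}) gives $\sup_{\mathcal{A}_\beta}\pi_{1|0}(\phi_\beta,\mathcal{A}_\beta)\le 2^{-(-\log\beta)}=\beta$ and, symmetrically, $\sup_{\mathcal{A}_\beta}\pi_{0|1}(\phi_\beta,\mathcal{A}_\beta)\le\beta$, so the error-probability constraint is met for every adversary. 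Since $\tau_\beta\le\tau_{0,\beta}$, Lemma~\ref{lemma:exp_tau_ub} together with the single-step overshoot bound $\expec_{\mathbb{P}}[S_{0,\tau_{0,\beta}}]\le -\log\beta+c_1$ gives $\sup_{\mathcal{A}_\beta}\expec_{\mathbb{P}}[\tau_\beta]\le(-\log\beta+c_1)/D(\pOneStar\|\qOneStar)<\infty$, hence $\inf_{\mathcal{A}_\beta}\frac{-\log\beta}{\expec_{\mathbb{P}}[\tau_\beta]}=\frac{-\log\beta}{\sup_{\mathcal{A}_\beta}\expec_{\mathbb{P}}[\tau_\beta]}\ge D(\pOneStar\|\qOneStar)\frac{-\log\beta}{-\log\beta+c_1}\to D(\pOneStar\|\qOneStar)$ as $\beta\to0$. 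The symmetric computation via \eqref{eqn:ub_tau1} gives $\inf_{\mathcal{A}_\beta}\frac{-\log\beta}{\expec_{\mathbb{Q}}[\tau_\beta]}\to D(\qZeroStar\|\pZeroStar)$, so the corner, and therefore all of $\tilde{\mathcal{E}}(\mathcal{P},\mathcal{Q})$, is achievable.

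\emph{Converse.} Suppose $(\phi_\beta)_\beta$ achieves $(\tilde{E}_0,\tilde{E}_1)$ with $\tilde{E}_0,\tilde{E}_1>0$; then $\sup_{\mathcal{A}_\beta}\expec_{\mathbb{P}}[\tau_\beta]<\infty$ and $\sup_{\mathcal{A}_\beta}\expec_{\mathbb{Q}}[\tau_\beta]<\infty$ for all small $\beta$ (otherwise the corresponding infimum would be $0$). For the first bound, let $\mathcal{A}'$ be the adversary playing i.i.d.\ $\pOneStar$ under $H_0$ and i.i.d.\ $\qOneStar$ under $H_1$; then $\pi_{1|0}(\phi_\beta,\mathcal{A}')\le\beta$ and $\pi_{0|1}(\phi_\beta,\mathcal{A}')\le\beta$ by the constraint. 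Exactly as in the converse of Theorem~\ref{thrm:bemv1}, data processing through the decision rule gives $D(\bern(\pi_{1|0})\|\bern(1-\pi_{0|1}))\le D(\mathbb{P}\|\mathbb{Q})$; lower-bounding the left side by $-h(\pi_{1|0})-(1-\pi_{1|0})\log\pi_{0|1}\ge -h(\pi_{1|0})+(1-\pi_{1|0})(-\log\beta)$ and evaluating the right side via Wald's identity (the equality case of Claims~1 and~2, plus dominated convergence, using $\expec_{\mathbb{P}}[\tau_\beta]<\infty$) as $D(\mathbb{P}\|\mathbb{Q})=\expec_{\mathbb{P}}[S_{0,\tau_\beta}]=\expec_{\mathbb{P}}[\tau_\beta]D(\pOneStar\|\qOneStar)$, rearranging yields $\frac{-\log\beta}{\expec_{\mathbb{P}}[\tau_\beta]}\le\frac{D(\pOneStar\|\qOneStar)+h(\pi_{1|0})/\expec_{\mathbb{P}}[\tau_\beta]}{1-\pi_{1|0}}$ (all errors here under $\mathcal{A}'$). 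Since $\pi_{1|0}\le\beta\to0$ and $h(\pi_{1|0})/\expec_{\mathbb{P}}[\tau_\beta]\le h(\beta)\to0$, the right side tends to $D(\pOneStar\|\qOneStar)$, and because the infimum over adversaries is at most the value at $\mathcal{A}'$ we get $\tilde{E}_0\le D(\pOneStar\|\qOneStar)$. The bound $\tilde{E}_1\le D(\qZeroStar\|\pZeroStar)$ follows by the symmetric argument with the adversary playing i.i.d.\ $\pZeroStar$ and $\qZeroStar$, data processing applied to $D(\mathbb{Q}\|\mathbb{P})$, and $D(\mathbb{Q}\|\mathbb{P})=\expec_{\mathbb{Q}}[\tau_\beta]D(\qZeroStar\|\pZeroStar)$.

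\emph{Main obstacle.} Conceptually nothing new is required; the estimates are precisely those already proved for Theorem~\ref{thrm:bemv1}. The only point needing care is the interaction with the $\inf_{\mathcal{A}_\beta}$ built into this definition of achievability: on the achievability side one must check that tying the thresholds to $-\log\beta$ keeps $\sup_{\mathcal{A}_\beta}\expec[\tau_\beta]$ finite and uniformly bounded (so that the infimum over adversaries really is $-\log\beta$ divided by the worst-case expected length, with the right limit), while on the converse side one must observe that the two extremal i.i.d.\ adversaries are admissible choices against which the infimum over adversaries can only decrease, so that a single-adversary bound transfers to $\tilde{E}_0$ and $\tilde{E}_1$.
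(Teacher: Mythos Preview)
Your proposal is correct and follows essentially the same approach as the paper: the same modified SPRT with both thresholds set to $-\log\beta$, the same error bound via Lemma~\ref{lemma:ut_ub}, the same expected-length bound via Lemma~\ref{lemma:exp_tau_ub} plus the one-step overshoot, and the same data-processing/Wald converse against the two extremal i.i.d.\ adversaries $(\pOneStar,\qOneStar)$ and $(\pZeroStar,\qZeroStar)$. Your write-up is in fact cleaner than the paper's in a couple of places (you use $\tau_{0,\beta}$ and $S_{0,\tau_{0,\beta}}$ when bounding $\expec_{\mathbb{P}}[\tau_\beta]$, and you make explicit why $\expec_{\mathbb{P}}[\tau_\beta]<\infty$ is available for Wald's identity in the converse).
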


\subsection{Proof of achievability Theorem~\ref{thrm:bemv3}}
The stopping time is given by,
\begin{equation} \label{eqn:tau_dual_problem}
\tau_\beta := \inf \{t \in \mathbb{N}: S_{1,t} \geq - \log \beta \hspace{5pt} \textup{or} \hspace{5pt} S_{0,t} \geq -\log \beta\},
\end{equation}
where $\beta > 0$. Let $T_1:=-\log \beta$, $T_2 := -\log \beta$ be the two thresholds. The decision rule is defined to be
\begin{equation} \label{eqn:decision_rule_dual_problem}
Z_{\beta}(x^{\tau_\beta}) := \left\{
\begin{array}{ll}
0 & \mbox{if } S_{1,\tau_\beta} \geq T_1, S_{0,\tau_\beta} < T_2\\
1 & \mbox{if } S_{0,\tau_\beta} \geq T_2.
\end{array}
\right.
\end{equation}
We define $\tau_{0,\beta}$ and $\tau_{1,\beta}$ to be
\begin{align}
\tau_{0,\beta} &:= \inf \inb{t \in \mathbb{N}: S_{0,t} \geq -\log \beta} \label{eqn:tau0Beta} \\
\tau_{1,\beta} &:= \inf \inb{t \in \mathbb{N}: S_{1,t} \geq -\log \beta}. \label{eqn:tau1Beta}
\end{align}
Recall that $\mathcal{A}_\beta$ denote the adversary strategy $(\hat{p}_t,\hat{q})_{t \in \mathbb{N}}$. We need to show that \\
(i) $\sup \limits_{\mathcal{A}_\beta} \mathbb{P}(Z_\beta = 1) < \beta$ and $\sup \limits_{\mathcal{A}_\beta} \mathbb{Q}(Z_\beta = 0) < \beta$.\\
(ii) \begin{align*}
\liminf_{\beta \rightarrow 0} \inf_{\mathcal{A}_\beta} \frac{-\log \beta}{\expec_{\mathbb{P}}[\tau_{\beta}]} &= D(\pOneStar\|\qOneStar)\\
\liminf_{\beta \rightarrow 0} \inf_{\mathcal{A}_\beta} \frac{-\log \beta}{\expec_{\mathbb{Q}}[\tau_{\beta}]} &= D(\qZeroStar\|\pZeroStar).
\end{align*}
\textit{Proof of (i)}: Let $E \subseteq \mathcal{F}_{\tau_\beta}$ be the error event under $H_0$. Thus, $E = \{ S_{0,\tau_\beta} \ge -\log \beta \} = \{ \tau_{\beta} = \tau_{0,\beta}\}$.
\begin{align*}
\pi_{1|0}(\phi_\beta,\mathcal{A}_\beta) &= \mathbb{P} (E) \\
&= \expec_{\mathbb{P}}[\mathrm{1}_{E}] \\
&= \expec_{\mathbb{P}}\insq{\mathrm{1}_{E} 2^{-S_{0,\tau_{0,\beta}}} 2^{S_{0,\tau_{0,\beta}}}} \\
&\le 2^{\log \beta} \expec_{\mathbb{P}} \insq{\mathrm{1}_{E} 2^{S_{0,\tau_{0,\beta}}}} \\
&\le \beta \cdot \expec_{\mathbb{P}} \insq{2^{S_{0,\tau_{0,\beta}}}}.
\end{align*}
Using lemma \ref{lemma:ut_ub} with $\tau_0=\tau_{0,\beta}$, we have $\expec_{\mathbb{P}} \insq{2^{S_{0,\tau_{0,\beta}}}} \le 1$. Thus, $\pi_{1|0}(\phi_n,\mathcal{A}_n) \le \beta$ for all adversary strategies $\mathcal{A}_n$. Likewise for $\pi_{0|1}(\phi_n,\mathcal{A}_n)$. \\
\textit{Proof of (ii)}: Using lemma \ref{lemma:exp_tau_ub} with $\tau_1=\tau_{1,\beta}$, we have
\begin{equation*}
\mathbb{E}_{\mathbb{P}}[\tau_{1,\beta}] \le \frac{\expec_{\mathbb{P}} \insq{S_{1,\tau_{1,\beta}}}}{D(\pOneStar\|\qOneStar)}.
\end{equation*}
Now,
\begin{align*}
S_{1,\tau_{1,\beta}} &= S_{1,\tau_{1,\beta}-1} + \log \frac{\pOneStar(X_{\tau_{1,\beta}})}{\qOneStar(X_{\tau_{1,\beta}})} \\
&\le -\log \beta + c_1.
\end{align*}
The inequality above follows from the fact that at the penultimate step $S_{1,\tau_{1,\beta}-1}$ cannot be above its threshold and the final increment can be at most $\max \limits_{x \in \sX}|\log \frac{\pOneStar(x)}{\qOneStar(x)}| = c_1$. Thus, we have
\begin{align*}
\mathbb{E}_{\mathbb{P}}[\tau_{1,\beta}] \le \frac{-\log \beta + c_1}{D(\pOneStar\|\qOneStar)}.
\end{align*}
Taking the limit $\beta \rightarrow \infty$, we get
\begin{align*}
    \lim_{\beta \rightarrow 0} \frac{-\log \beta}{\expec_{\mathbb{P}}[\tau_\beta]} \ge D(\pOneStar\|\qOneStar)
\end{align*}
Likewise, we get
\begin{align*}
    \lim_{\beta \rightarrow 0} \frac{-\log \beta}{\expec_{\mathbb{Q}}[\tau_\beta]} \ge D(\qZeroStar\|\pZeroStar)
\end{align*}.
This completes the achievability.

\subsection{Proof of converse of Theorem~\ref{thrm:bemv3}}
Assume that the pair $(\tilde{E}_0,\tilde{E}_1)$ is achievable by the family of tests $(\phi_\beta=(\tau_\beta,Z_\beta))_\beta$. The test $\phi_\beta$ satisfies the constraint that $\sup \limits_{\mathcal{A}_\beta} \mathbb{P}(Z_\beta = 1) < \beta$ and $\sup \limits_{\mathcal{A}_\beta} \mathbb{Q}(Z_\beta = 0) < \beta$ where the supremum is over all possible adversary strategies. Consider the following adversary strategy, $\mathcal{A}_\beta'$. Fix $\hat{p}_i(.)=\pOneStar$ for all $i \in \mathbb{N}$, i.e. under $H_0$ the observations are drawn i.i.d. according to $\pOneStar$. Similarly, fix $\hat{q}_i(.)=\qOneStar$ for all $i \in \mathbb{N}$, i.e. under $H_1$ the observations are drawn according to $\qOneStar$. By data processing inequality we have,
\begin{align*}
	D(\bern(\pi_{1|0}(\phi_n,\mathcal{A}_{\beta}'))\|\bern(1-\pi_{0|1}(\phi_n,\mathcal{A}_{\beta}'))) &\le D(\mathbb{P}\|\mathbb{Q}) \\
    &= \expec_{\mathbb{P}}[\tau_\beta] D(\pOneStar\|\qOneStar).
\end{align*}
The L.H.S. can be lower bounded as follows.    
\begin{align*}
&D(\bern(\pi_{1|0}(\phi_\beta,\mathcal{A}_{\beta}'))\|\bern(1-\pi_{0|1}(\phi_\beta,\mathcal{A}_{\beta}'))) \\
&= \pi_{1|0}(\phi_\beta,\mathcal{A}_{\beta}') \log \frac{\pi_{1|0}(\phi_\beta,\mathcal{A}_{\beta}')}{1 \!-\! \pi_{0|1}(\phi_\beta,\mathcal{A}_{\beta}')} + \! (1 \!-\! \pi_{1|0}(\phi_\beta,\mathcal{A}_{\beta}')) \log \frac{1 \!-\! \pi_{1|0}(\phi_\beta,\mathcal{A}_{\beta}')}{\pi_{0|1}(\phi_\beta,\mathcal{A}_{\beta}')} \\
&= -h(\pi_{1|0}(\phi_n,\mathcal{A}_n')) - \pi_{1|0}(\phi_n,\mathcal{A}_n') \log (1 - \pi_{0|1}(\phi_n,\mathcal{A}_n')) - (1 \!-\! \pi_{1|0}(\phi_\beta,\mathcal{A}_{\beta}')) \log \pi_{0|1}(\phi_\beta,\mathcal{A}_{\beta}') \\
&\ge -h(\pi_{1|0}(\phi_n,\mathcal{A}_n')) - (1 \!-\! \beta) \log \beta.
\end{align*}
The last inequality is obtained by dropping a non-negative term and using the fact that \\
$\pi_{0|1}(\phi_\beta,\mathcal{A}_{\beta}') \le \beta,\pi_{0|1}(\phi_\beta,\mathcal{A}_{\beta}') \le \beta$. Thus, we have
\begin{align*}
    &\frac{-\log \beta}{\expec_{\mathbb{P}}[\tau_\beta]} \le  \frac{D(\pOneStar\|\qOneStar)}{1-\beta} + \frac{h(\pi_{1|0}(\phi_n,\mathcal{A}_n'))}{(1-\beta)\expec_{\mathbb{P}}[\tau_\beta]} \\
    &\lim_{\beta \rightarrow 0} \frac{-\log \beta}{\expec_{\mathbb{P}}[\tau_\beta]} \le D(\pOneStar\|\qOneStar) 
\end{align*}
Now, consider the adversary strategy, $\mathcal{A}_\beta''$. Fix $\hat{p}_i(.)=\pZeroStar$ for all $i \in \mathbb{N}$. Similarly, fix $\hat{q}_i(.)=\qOneStar$ for all $i \in \mathbb{N}$. By following similar arguments as before, we get
\begin{align*}
    \lim_{\beta \rightarrow 0} \frac{-\log \beta}{\expec_{\mathbb{Q}}[\tau_\beta]} \le D(\qZeroStar\|\pZeroStar) 
\end{align*}
This completes the proof.

\section{Discussion}
In this work, we have focused on the first-order asymptotics of the error exponents under various settings. Characterizing second-order asymptotics can be a topic for future studies.
\appendices
\section{Tradeoff in the fixed length setting} \label{app:tradeoff}
A fixed length test can be written as $\phi_n=(n,Z_n)$, i.e. a sequential test with $\tau_n=n$. The adversary strategy $\mathcal{A}_n$ for the test $\phi_n$ is given by $(\hat{p}_t,\hat{q}_t)_{t=1}^{n}$ where $\hat{p}_t:\mathcal{X}^{t-1} \rightarrow \mathcal{P}$ and $\hat{q}_t:\mathcal{X}^{t-1} \rightarrow \mathcal{Q}$. Recall that the law of $X^t$ under $H_0$ is given by $\mathbb{P}(x^t)=\prod_{i=1}^{t}p(x_i|x^{i-1})$ where $p(.|x^{i-1})=\hat{p}_i(x^{i-1})$ and under $H_1$ by $\mathbb{Q}(x^t)=\prod_{i=1}^{t}q(x_i|x^{i-1})$ where $q(.|x^{i-1})=\hat{q}_i(x^{i-1})$. For a particular adversary strategy, the type-I and type-II errors are given by
\begin{align*}
    \pi_{1|0}(\phi_n,\mathcal{A}_n) &:= \mathbb{P}(Z_n = 1) \\
    \pi_{0|1}(\phi_n,\mathcal{A}_n) &:= \mathbb{Q}(Z_n = 0)
\end{align*}
We seek to characterize the maximum achievable type-I error exponent when the type-II error exponent is at least $r$. Formally, we define the Hoeffding exponent for the adversarial hypothesis testing problem to be
\begin{equation*}
    \expHoeffding(\mathcal{P},\mathcal{Q}) := \sup_{\phi_n} \left\{ \lim_{n \rightarrow \infty} \frac{-1}{n}  \sup_{\mathcal{A}_n} \log \pi_{1|0}(\phi_n,\mathcal{A}_n) \ \biggr | \lim_{n \rightarrow \infty} \frac{-1}{n} \sup_{\mathcal{A}_n} \log \pi_{0|1}(\phi_n,\mathcal{A}_n) \ge  r \right \}  
\end{equation*}
For distributions $p$ and $q$, let 
\begin{equation*}
    \psi_{\lambda}(p\|q) :=  \log \sum_{x \in \Omega} p(x)^{1-\lambda}q(x)^{\lambda}.
\end{equation*}
Observe that $\psi_{\lambda}(p\|q) = - \lambda D_{1-\lambda}(p\|q)$, where $D_{1-\lambda}(p\|q)$ is the R\'enyi divergence of order $1-\lambda$ between $p$ and $q$. In the case of simple binary hypothesis testing (i.e. when $\mathcal{P}=\{p\}$ and $\mathcal{Q}=\{q\}$), Blahut \cite{blahut1974hypothesis} characterized the Hoeffding exponent as follows
\begin{align}
   \expHoeffding(\{p\},\{q\}) &= \min_{u:D(u\|q) \le r} D(u\|p) \nonumber \\
   &= \sup_{0 \le \lambda \le 1} \frac{-\lambda r -\psi_{\lambda}(p\|q)}{1-\lambda} \label{eqn:hoeff_exp}
\end{align}
The first equality has a geometric interpretation: $u$ is the closest distribution to $p$ (in KL divergence) from a KL divergence ball of radius $r$ centered at $q$. From \cite[Equation 7,8]{hayashi2009discrimination}, we know that there exists a unique $\lambda$ which attains the optimum in \eqref{eqn:hoeff_exp}. Let
\begin{align} \label{eqn:hoeff_pair}
    (\pHStar, \qHStar)  &:=  \argmin \limits_{p \in \mathcal{P}, q \in \mathcal{Q}} \ \sup_{0 \le \lambda \le 1} \frac{-\lambda r -\psi_{\lambda}(p\|q)}{1-\lambda}
\end{align}
$(\pHStar, \qHStar)$ can be interpreted as the hardest pair of distributions for the problem of characterizing optimal tradeoff for adversarial hypothesis testing. Let $s^*$ be the Hoeffding exponent for $(\pHStar, \qHStar)$ and $\lambda^*$ be the corresponding unique optimizer.
\begin{align}
    \lambda^* &:= \arg\sup \limits_{0 \le \lambda \le 1} \frac{-\lambda r -\psi_{\lambda}(\pHStar\|\qHStar)}{1-\lambda} \label{eqn:lambda_star} \\
    s^* &:= \frac{-\lambda^* r -\psi_{\lambda^*}(\pHStar\|\qHStar)}{1-\lambda^*} \label{eqn:hoeff_exp}
\end{align}
We now prove the following lemma which will help us to show that the likelihood ratio test w.r.t. the pair $(\pHStar,\qHStar)$ obtains the optimal tradeoff.
\begin{lemma}
For any distribution $q \in \mathcal{Q}$, we have
    \begin{align} \label{eqn:lemma_ineq1}
        \sum_{x \in \Omega} q(x) \inp{\frac{\pHStar(x)}{\qHStar(x)}}^{1-{\lambda}^*} \le \sum_{x \in \Omega} \qHStar(x) \inp{\frac{\pHStar(x)}{\qHStar(x)}}^{1-{\lambda}^*} = 2^{\psi_{\lambda}(\pHStar\|\qHStar)}
    \end{align}
For any distribution $p \in \mathcal{P}$, we have    
    \begin{align} \label{eqn:lemma_ineq2}
        \sum_{x \in \Omega} p(x) \inp{\frac{\qHStar(x)}{\pHStar(x)}}^{{\lambda}^*} \le \sum_{x \in \Omega} \pHStar(x) \inp{\frac{\qHStar(x)}{\pHStar(x)}}^{{\lambda}^*} = 2^{\psi_{\lambda}(\pHStar\|\qHStar)}
    \end{align}
\end{lemma}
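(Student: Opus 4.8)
The plan is to read both inequalities as first-order stationarity conditions for the joint minimization defining $(\pHStar,\qHStar)$ in \eqref{eqn:hoeff_pair}. Two ingredients drive this: (a) $\mathcal{P}$ and $\mathcal{Q}$ are convex, so one may perturb $\pHStar$ (resp. $\qHStar$) along the line segment toward an arbitrary $p\in\mathcal{P}$ (resp. $q\in\mathcal{Q}$) without leaving the feasible set; and (b) at $(\pHStar,\qHStar)$ the inner supremum over $\lambda$ is attained at the \emph{unique} point $\lambda^*$ (this is exactly the fact imported from \cite{hayashi2009discrimination}), so the outer objective is differentiable along such a perturbation and its derivative is computed by an envelope (Danskin) argument that retains only the $\lambda=\lambda^*$ term.

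Write $h(\lambda,p,q):=\dfrac{-\lambda r-\psi_{\lambda}(p\|q)}{1-\lambda}$, so that the objective in \eqref{eqn:hoeff_pair} is $G(p,q):=\sup_{0\le\lambda\le1}h(\lambda,p,q)$ and $(\pHStar,\qHStar)=\argmin_{p\in\mathcal{P},\,q\in\mathcal{Q}}G(p,q)$. The claimed equalities are immediate from the definition of $\psi_{\lambda}$, since $\sum_{x}\qHStar(x)\big(\pHStar(x)/\qHStar(x)\big)^{1-\lambda^*}=\sum_{x}\pHStar(x)^{1-\lambda^*}\qHStar(x)^{\lambda^*}=2^{\psi_{\lambda^*}(\pHStar\|\qHStar)}$ (and symmetrically for the $p$-side; the subscript on the right-hand side of \eqref{eqn:lemma_ineq1}--\eqref{eqn:lemma_ineq2} should read $\lambda^*$). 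For \eqref{eqn:lemma_ineq1}, fix $p=\pHStar$ and let $g(q):=G(\pHStar,q)$, which is minimized over $\mathcal{Q}$ at $\qHStar$. Given $q\in\mathcal{Q}$, put $q_{\epsilon}:=(1-\epsilon)\qHStar+\epsilon q\in\mathcal{Q}$; then $g(q_{\epsilon})\ge g(\qHStar)$ for all $\epsilon\in[0,1]$, so the right derivative at $\epsilon=0$ is nonnegative. By (A2) and finiteness of $\sX$ all distributions in $\mathcal{P}\cup\mathcal{Q}$ share a common support, on which $\psi_{\lambda}$, and hence $h$, is jointly smooth in $(\lambda,q)$; since the supremum at $\qHStar$ is attained only at $\lambda^*$, Danskin's theorem gives $\frac{d}{d\epsilon}\big|_{0^+}g(q_{\epsilon})=\sum_{x}\frac{\partial h}{\partial q(x)}(\lambda^*,\pHStar,\qHStar)\,\big(q(x)-\qHStar(x)\big)$. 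A direct computation yields $\frac{\partial h}{\partial q(x)}(\lambda,\pHStar,q)=-\frac{\lambda}{1-\lambda}\cdot\frac{\pHStar(x)^{1-\lambda}q(x)^{\lambda-1}}{\sum_{y}\pHStar(y)^{1-\lambda}q(y)^{\lambda}}$, which at $(\lambda^*,\pHStar,\qHStar)$ equals $-\dfrac{\lambda^*}{(1-\lambda^*)\,2^{\psi_{\lambda^*}(\pHStar\|\qHStar)}}\big(\pHStar(x)/\qHStar(x)\big)^{1-\lambda^*}$. Substituting into the nonnegativity of the derivative and dividing through by the negative constant $-\lambda^*/\big((1-\lambda^*)2^{\psi_{\lambda^*}(\pHStar\|\qHStar)}\big)$ (using $\lambda^*\in(0,1)$) flips the inequality and gives $\sum_{x}q(x)\big(\pHStar(x)/\qHStar(x)\big)^{1-\lambda^*}\le\sum_{x}\qHStar(x)\big(\pHStar(x)/\qHStar(x)\big)^{1-\lambda^*}$, which is \eqref{eqn:lemma_ineq1}. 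Inequality \eqref{eqn:lemma_ineq2} follows identically: fix $q=\qHStar$, perturb $p$ toward an arbitrary $p\in\mathcal{P}$, use $\frac{\partial h}{\partial p(x)}(\lambda,p,\qHStar)=-\frac{p(x)^{-\lambda}\qHStar(x)^{\lambda}}{\sum_{y}p(y)^{1-\lambda}\qHStar(y)^{\lambda}}$, which at $(\lambda^*,\pHStar,\qHStar)$ equals $-2^{-\psi_{\lambda^*}(\pHStar\|\qHStar)}\big(\qHStar(x)/\pHStar(x)\big)^{\lambda^*}$, and apply the same sign flip.

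The main obstacle is making the envelope step fully rigorous: one must show $g$ possesses a one-sided directional derivative at $\qHStar$ equal to $\langle\nabla_{q}h(\lambda^*,\pHStar,\qHStar),\,\cdot\,\rangle$. This rests on the uniqueness of $\lambda^*$ (from \cite{hayashi2009discrimination}), on the joint smoothness of $h$ noted above, and on the supremum over $\lambda$ being attained at an \emph{interior} point $\lambda^*\in(0,1)$: in the regime of interest one checks that $h(0,\pHStar,\qHStar)=0<s^*$ and $h(\lambda,\pHStar,\qHStar)\to-\infty$ as $\lambda\to1^-$, so the maximization may be restricted to a compact subinterval of $(0,1)$ on which Danskin's theorem applies; this reduction should be recorded explicitly. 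A minor point, already used above, is that (A2) forces a common support, which guarantees $\pHStar,\qHStar$ are strictly positive there, so all the fractional powers and their derivatives are well defined.
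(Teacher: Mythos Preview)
Your proposal is correct and follows essentially the same route as the paper: perturb $\qHStar$ (resp.\ $\pHStar$) along the segment toward an arbitrary $q\in\mathcal{Q}$ (resp.\ $p\in\mathcal{P}$), differentiate the value function $V(t)=\sup_{\lambda}h(\lambda,\cdot)$ at $t=0$ using that the inner maximizer $\lambda^*(t)$ satisfies $\partial_{\lambda}h=0$, and read off the inequality from $V'(0)\ge 0$. The paper phrases the envelope step as an informal chain-rule computation, whereas you name it as Danskin's theorem and are more explicit about the hypotheses it requires (uniqueness and interiority of $\lambda^*$, joint smoothness of $h$); otherwise the arguments coincide.
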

\begin{proof}
    We first show \eqref{eqn:lemma_ineq1}. Consider the function
    \begin{align*}
        F_{p,q}(\lambda) = \frac{-\lambda r -\psi_{\lambda}(p\|q)}{1-\lambda} 
    \end{align*}
    For $t \in [0,1]$ and $q \in \mathcal{Q}$, let $q_t = tq + (1-t)\qHStar$. Let $f:[0,1]^2 \rightarrow \mathbb{R}$ be defined by
    \begin{equation*}
        f(\lambda,t) := F_{\pHStar,q_t}(\lambda)
    \end{equation*}
    Define $V(t) := \sup_{\lambda \in (0,1)} f(\lambda,t)$. As we have shown, for any $t \in [0,1]$, there exists a unique $\lambda^*(t) \in (0,1)$ such that $V(t) = f(\lambda^*(t),t)$. By chain rule, we have
    \begin{align*}
        V'(t) &= \frac{\partial f(\lambda^*(t),t)}{\partial \lambda^*(t)} . \frac{\partial \lambda^*(t)}{\partial t}  + \frac{\partial f(\lambda^*(t),t)}{\partial t} \\
        &= \frac{\partial f(\lambda^*(t),t)}{\partial t}
    \end{align*}
    The second equality holds since $\frac{\partial f(\lambda^*(t),t)}{\partial \lambda^*(t)} = 0$ by optimality of $\lambda^*(t)$. We now compute $V'(t)$.
    \begin{align*}
        V'(t) &= \frac{\partial f(\lambda^*(t),t)}{\partial t} \\
        &= \frac{-1}{1-\lambda^*(t)} \sum_{x} \frac{ \lambda^*(t) (tq+(1-t)\qHStar)^{\lambda^*(t)-1}(q-\qHStar) (\pHStar)^{1-\lambda^*(t)} }{2^{\psi_{\lambda^*(t)}}}
    \end{align*}
    For $t=0$, we have $q_0=\qHStar$ and $\lambda^*(0)=\lambda^*$ by \eqref{eqn:lambda_star}. The optimality of $\qHStar$ as defined in \eqref{eqn:hoeff_pair} implies $V'(0) \ge 0$. Substituting $t=0$ in the above equation, we get
    \begin{align*}
        0 \le V'(0) &= \frac{-1}{1-\lambda^*} \sum_{x} \lambda^* (\qHStar)^{\lambda^*-1}(q-\qHStar)(\pHStar)^{1-\lambda^*}
    \end{align*}
    Rearranging the terms, we get 
    \begin{align*}
        0 &\le \sum_{x} (q-\qHStar) \left( \frac{\pHStar}{\qHStar} \right)^{1-\lambda^*} \\
        \sum_{x \in \Omega} q(x) \inp{\frac{\pHStar(x)}{\qHStar(x)}}^{1-{\lambda}^*} &\le \sum_{x \in \Omega} \qHStar(x) \inp{\frac{\pHStar(x)}{\qHStar(x)}}^{1-{\lambda}^*} = 2^{\psi_{\lambda}(\pHStar\|\qHStar)}.
    \end{align*}
    To show \eqref{eqn:lemma_ineq2}, we let $p_t = tp + (1-t)\pHStar$ for some $t \in [0,1], p \in \mathcal{P}$ and define $f:[0,1]^2 \rightarrow \mathbb{R}$ to be
    \begin{equation*}
        f(\lambda,t) := F_{p_t,\qHStar}(\lambda)
    \end{equation*}
    The rest of the proof is similar to the proof of \eqref{eqn:lemma_ineq1}.
\end{proof}

\begin{theorem} \label{thrm:bemv4}
    \begin{equation}
        \expHoeffding(\mathcal{P},\mathcal{Q}) = \sup_{0 \le \lambda \le 1} \frac{-\lambda r -\psi_{\lambda}(\pHStar\|\qHStar)}{1-\lambda} = s^*
    \end{equation}
\end{theorem}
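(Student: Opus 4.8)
The plan is to establish $\expHoeffding(\sP,\sQ)\ge s^*$ (achievability) and $\expHoeffding(\sP,\sQ)\le s^*$ (converse) separately, with the preceding lemma --- inequalities \eqref{eqn:lemma_ineq1} and \eqref{eqn:lemma_ineq2} --- as the structural input that handles the adaptive adversary. For achievability I would analyze the fixed-length likelihood-ratio test with respect to $(\pHStar,\qHStar)$: take $\phi_n=(n,Z_n)$ with $Z_n=0$ exactly when $\sum_{i=1}^n\log\frac{\pHStar(X_i)}{\qHStar(X_i)}\ge n\gamma$, where $\gamma:=\frac{r+\psi_{\lambda^*}(\pHStar\|\qHStar)}{1-\lambda^*}$, and show it is robust to every adversary. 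For the type-I error under an arbitrary $\mathcal{A}_n$, a Chernoff bound with parameter $\lambda^*$ gives $\pi_{1|0}(\phi_n,\mathcal{A}_n)\le 2^{n\lambda^*\gamma}\,\expec_{\mathbb{P}}[\prod_{i=1}^n(\qHStar(X_i)/\pHStar(X_i))^{\lambda^*}]$; conditioning on $X^{t-1}$, the adversary's choice $\hat{p}_t(X^{t-1})$ lies in $\sP$, so \eqref{eqn:lemma_ineq2} applied to that conditional law bounds $\expec_{\mathbb{P}}[(\qHStar(X_t)/\pHStar(X_t))^{\lambda^*}\mid X^{t-1}]\le 2^{\psi_{\lambda^*}(\pHStar\|\qHStar)}$, and iterating the tower rule yields $\expec_{\mathbb{P}}[\prod_{i=1}^n(\qHStar(X_i)/\pHStar(X_i))^{\lambda^*}]\le 2^{n\psi_{\lambda^*}(\pHStar\|\qHStar)}$. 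Plugging in the chosen $\gamma$ collapses the estimate to $\sup_{\mathcal{A}_n}\pi_{1|0}(\phi_n,\mathcal{A}_n)\le 2^{-ns^*}$; symmetrically, a Chernoff bound with parameter $1-\lambda^*$ together with \eqref{eqn:lemma_ineq1} gives $\sup_{\mathcal{A}_n}\pi_{0|1}(\phi_n,\mathcal{A}_n)\le 2^{-nr}$, so the type-II constraint is met. To see that the limits in the definition of $\expHoeffding$ really exist and are tight for this test, I would then let the adversary play i.i.d.\ $\pHStar$ under $H_0$ and i.i.d.\ $\qHStar$ under $H_1$: by Cram\'er's theorem the type-I error has exponent exactly $\sup_{0\le\lambda\le1}(-\lambda\gamma-\psi_\lambda(\pHStar\|\qHStar))$, which equals $s^*$ because the optimality condition defining $\lambda^*$ in \eqref{eqn:lambda_star} forces that supremum to be attained at $\lambda^*$. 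This gives $\expHoeffding(\sP,\sQ)\ge s^*$.

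For the converse I would take any sequence $(\phi_n)$ contributing to the supremum defining $\expHoeffding(\sP,\sQ)$, so $\lim_n -\frac{1}{n}\log\sup_{\mathcal{A}_n}\pi_{0|1}(\phi_n,\mathcal{A}_n)\ge r$, and restrict the adversary to the i.i.d.\ strategy that plays $\pHStar$ under $H_0$ and $\qHStar$ under $H_1$. Against this strategy the setup is exactly the simple binary test of $\pHStar$ versus $\qHStar$ with $n$ i.i.d.\ samples, and since the type-II error here is at most $\sup_{\mathcal{A}_n}\pi_{0|1}(\phi_n,\mathcal{A}_n)$, the type-II exponent against this adversary is at least $r$. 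Blahut's characterization of the Hoeffding exponent \cite{blahut1974hypothesis} then forces the type-I exponent against this adversary to be at most $\expHoeffding(\{\pHStar\},\{\qHStar\})$, which equals $s^*$ by the definitions \eqref{eqn:hoeff_pair} and \eqref{eqn:lambda_star}. Since $\sup_{\mathcal{A}_n}\pi_{1|0}(\phi_n,\mathcal{A}_n)$ is at least the type-I error of this adversary, taking the limit gives $\lim_n -\frac{1}{n}\log\sup_{\mathcal{A}_n}\pi_{1|0}(\phi_n,\mathcal{A}_n)\le s^*$, hence $\expHoeffding(\sP,\sQ)\le s^*$.

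The main obstacle is the achievability step --- controlling the tilted likelihood ratio under an \emph{adaptive} adversary, where a plain i.i.d.\ Chernoff/Cram\'er computation does not apply because the law of $X_t$ depends on $X^{t-1}$. This is precisely what the preceding lemma supplies: \eqref{eqn:lemma_ineq1} and \eqref{eqn:lemma_ineq2} say that the worst conditional moment generating function over $\sP$ (respectively $\sQ$) is attained at $\pHStar$ (respectively $\qHStar$), which lets the tower-rule induction close uniformly over $\mathcal{A}_n$. The remaining points --- that $\gamma$ lies in the range making both exponents strictly positive, and that $\lambda^*\in(0,1)$ exists and is unique, as guaranteed by \cite{hayashi2009discrimination} --- are routine.
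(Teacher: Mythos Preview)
Your proposal is correct and essentially identical to the paper's proof: the same likelihood-ratio test with respect to $(\pHStar,\qHStar)$ (your threshold $\gamma=\frac{r+\psi_{\lambda^*}(\pHStar\|\qHStar)}{1-\lambda^*}$ is exactly the paper's $r-s^*$), the same Chernoff bounds with parameters $\lambda^*$ and $1-\lambda^*$ closed via the tower rule and \eqref{eqn:lemma_ineq1}--\eqref{eqn:lemma_ineq2}, and the same converse by restricting to the i.i.d.\ adversary $(\pHStar,\qHStar)$ and invoking Blahut's simple-hypothesis result. Your extra remark about using Cram\'er's theorem to confirm the limit exists is a nice touch that the paper omits, and your decision-rule convention ($Z_n=0$ when the ratio is large) is the correct one --- the paper's stated rule appears to have the labels swapped.
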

\begin{proof}
    \subsection{Achievability}
    Consider the test with the following decision rule.
    \begin{align*}
        Z_n &= \begin{cases}
            1 &\text{ if } \frac{\pHStar(X_i)}{\qHStar(X_i)} > 2^{n(r-s^*)} \\
            0 &\text{ otherwise }
        \end{cases}
    \end{align*}
    We need to show that: \\
    (i) Under $H_1$, for all possible adversary strategies $\mathcal{A}_n$, we have $\pi_{0|1}(\phi_n, \mathcal{A}_n) \le 2^{-nr}$. \\
    (ii) Under $H_0$, for all possible adversary strategies $\mathcal{A}_n$, we have $\pi_{1|0}(\phi_n, \mathcal{A}_n) \le 2^{-ns^*}$. \\
    We first show $(i)$.
    \begin{align*}
        \pi_{0|1}(\phi_n, \mathcal{A}_n) &= \mathbb{Q} \inp{ \prod_{i=1}^{n} \frac{\pHStar(X_i)}{\qHStar(X_i)} > 2^{n(r-s^*)}} \\
        &= \mathbb{Q} \inp{ \prod_{i=1}^{n} \frac{\pHStar(X_i)^{1-\lambda^*}}{\qHStar(X_i)^{1-\lambda^*}} > 2^{n({1-\lambda^*})(r-s^*)} } \\
        &\overset{(a)}{\le} 2^{-n({1-\lambda^*})(r-s^*)} \mathbb{E}_{\mathbb{Q}} \insq{\prod_{i=1}^{n}  \frac{\pHStar(X_i)^{1-\lambda^*}}{\qHStar(X_i)^{1-\lambda^*}}} \\
        &= 2^{-n({1-\lambda^*})(r-s^*)} \mathbb{E} \insq{ \mathbb{E} \insq{\prod_{i=1}^{n} \frac{\pHStar(X_i)^{1-\lambda^*}}{\qHStar(X_i)^{1-\lambda^*}} \bigg| X_1,\ldots,X_{n-1} } } \\
        &= 2^{-n({1-\lambda^*})(r-s^*)} \mathbb{E} \insq{ \prod_{i=1}^{n-1} \frac{\pHStar(X_i)^{1-\lambda^*}}{\qHStar(X_i)^{1-\lambda^*}} \mathbb{E} \insq{ \frac{\pHStar(X_n)^{1-\lambda^*}}{\qHStar(X_n)^{1-\lambda^*}} \bigg| X_1,\ldots,X_{n-1} } }
    \end{align*}
    $(a)$ is by Markov's inequality. The next two equalities follow from the law of total expectation and the fact that $\prod_{i=1}^{n-1} \frac{\pHStar(X_i)^{1-\lambda^*}}{\qHStar(X_i)^{1-\lambda^*}}$ is a function of $X_1,\ldots,X_n$. Now, observe the inner expectation. Conditioned on past observations, the adversary will choose a distribution $\hat{q}_{n-1}(X_1,\ldots, X_n)$ from the set $\mathcal{Q}$. We can then use \eqref{eqn:lemma_ineq1} to bound the inner expectation.
    \begin{align*}
        \pi_{0|1}(\phi_n, \mathcal{A}_n) &\le 2^{-n({1-\lambda^*})(r-s^*)} 2^{\psi_{\lambda^*}(\pHStar\|\qHStar)} \mathbb{E} \insq{ \prod_{i=1}^{n-1} \frac{\pHStar(X_i)^{1-\lambda^*}}{\qHStar(X_i)^{1-\lambda^*}}  } \\
        &\overset{(b)}{\le} 2^{-n({1-\lambda^*})(r-s^*)} 2^{n\psi_{\lambda^*}(\pHStar\|\qHStar)} \\
        &= 2^{-nr} 2^{n(\lambda^*r+(1-\lambda^*)s^* + \psi_{\lambda^*}(\pHStar\|\qHStar))} \\
        &\overset{(c)}{=} 2^{-nr}
    \end{align*}
    $(b)$ follows from repeated application of \eqref{eqn:lemma_ineq1}. $(c)$ follows from the definition of $s^*$ in \eqref{eqn:hoeff_exp}. \\
    We now show (ii).
    \begin{align*}
        \pi_{1|0}(\phi_n, \mathcal{A}_n) &= \mathbb{P} \inp{ \prod_{i=1}^{n} \frac{\pHStar(X_i)}{\qHStar(X_i)} \le 2^{n(r-s^*)}} \\
        &= \mathbb{P} \inp{ \prod_{i=1}^{n} \frac{\qHStar(X_i)^{\lambda^*}}{\pHStar(X_i)^{\lambda^*}} > 2^{-n{\lambda^*}(r-s^*)}} \\
        &\overset{(a)}{\le} 2^{n{\lambda^*}(r-s^*)} \mathbb{E}_{\mathbb{P}} \insq{ \prod_{i=1}^{n} \frac{\qHStar(X_i)^{\lambda^*}}{\pHStar(X_i)^{\lambda^*}} } \\
        &= 2^{n{\lambda^*}(r-s^*)}  \mathbb{E} \insq{ \mathbb{E} \insq{\prod_{i=1}^{n} \frac{\qHStar(X_i)^{\lambda^*}}{\pHStar(X_i)^{\lambda^*}} \bigg| X_1,\ldots,X_{n-1} } } \\
        &= 2^{n{\lambda^*}(r-s^*)}  \mathbb{E} \insq{ \prod_{i=1}^{n-1} \frac{\qHStar(X_i)^{\lambda^*}}{\pHStar(X_i)^{\lambda^*}} \mathbb{E} \insq{ \frac{\qHStar(X_n)^{\lambda^*}}{\pHStar(X_n)^{\lambda^*}} \bigg| X_1,\ldots,X_{n-1} } }
    \end{align*}
    $(a)$ is by Markov's inequality. The next two equalities follow from the law of total expectation and the fact that $\prod_{i=1}^{n-1} \frac{\qHStar(X_i)^{1-\lambda^*}}{\pHStar(X_i)^{1-\lambda^*}}$ is a function of $X_1,\ldots,X_n$. Now, observe the inner expectation. Conditioned on past observations, the adversary will choose a distribution $\hat{p}_{n-1}(X_1,\ldots, X_n)$ from the set $\mathcal{P}$. We can then use \eqref{eqn:lemma_ineq2} to bound the inner expectation.
    \begin{align*}
        \pi_{1|0}(\phi_n, \mathcal{A}_n) &\le 2^{n{\lambda^*}(r-s^*)} 2^{\psi_{\lambda^*}(\pHStar\|\qHStar)} \mathbb{E} \insq{ \prod_{i=1}^{n-1} \frac{\qHStar(X_i)^{\lambda^*}}{\pHStar(X_i)^{\lambda^*}} } \\
        &\overset{(b)}{\le} 2^{n{\lambda^*}(r-s^*)} 2^{n\psi_{\lambda^*}(\pHStar\|\qHStar)} \\
        &= 2^{-ns^*} 2^{n(\lambda^*r + (1-\lambda^*)s^* + \psi_{\lambda^*}(\pHStar\|\qHStar) )} \\
        &\overset{(c)}{=} 2^{-ns^*}
    \end{align*}
    $(b)$ follows from repeated application of \eqref{eqn:lemma_ineq2}. $(c)$ follows from the definition of $s^*$ in \eqref{eqn:hoeff_exp}.
    \subsection{Converse}
    Fix the adversary strategy to be i.i.d. $\pHStar$ under $H_0$ and i.i.d. $\qHStar$ under $H_1$. Applying \eqref{eqn:hoeff_exp} finishes the converse.
\end{proof}
\newpage
\bibliographystyle{ieeetr}

\bibliography{refs}

\end{document}